\documentclass[journal]{IEEEtran}
\usepackage{placeins}
\usepackage{diagbox}
\usepackage[inline]{enumitem}
\usepackage{xcolor}
\usepackage{amsmath,amsfonts}
\usepackage{algorithm}
\usepackage[group-separator={,},group-minimum-digits=4]{siunitx}
\usepackage{threeparttable}
\usepackage{array}
\usepackage[caption=false,font=normalsize,labelfont=sf,textfont=sf]{subfig}
\usepackage{textcomp}
\usepackage{stfloats}
\usepackage{url}
\usepackage{verbatim}
\usepackage{graphicx}
\usepackage{algpseudocode}
\usepackage{cite}
\usepackage{balance}
\usepackage{mathrsfs}
\usepackage[colorlinks=true,
            linkcolor=blue,
            anchorcolor=blue,
            citecolor=blue]{hyperref}

\begin{document}

\title{
     \fontsize{16.5pt}{\baselineskip}\selectfont{Bistatic Integrated Sensing and Communication in the Presence of a Disco Reconfigurable Intelligent Surface: Disruption, Enhancement, or Both?}
    % \fontsize{18 pt}{\baselineskip}\selectfont{On the Performance of Bistatic Integrated Sensing and Communication in the Presence of a Disco Reconfigurable Intelligent Surface}
%On the Performance of Integrated Sensing and Communications Under DISCO Physical-Layer Jamming Attacks 
%Bistatic Integrated Sensing and Communication with Imperfect CSI Deliberately Induced by Random RIS Configurations
} % \fontsize{20.5pt}{\baselineskip}\selectfont
\author{ 
{%Authors       
    %
    % Luyao~Sun,
    % Yongxing~Song,
    Huan~Huang,~\textit{Member,~IEEE}, 
    Hongliang~Zhang,~\textit{Member,~IEEE},
    Weidong~Mei,~\textit{Member,~IEEE},  
    Minghui~Min,~\textit{Senior~Member,~IEEE}, 
    %Dongdong~Zou,
    %Yi~Cai, %~\textit{Senior~Member,~IEEE},
    % Gangxiang~Shen,~\textit{Senior~Member,~IEEE},\\
    %Lingyang~Song,~\textit{Fellow,~IEEE},
    %Dusit~Niyato,~\textit{Fellow,~IEEE},
    %A.~Lee~Swindlehurst,~\textit{Fellow,~IEEE},
    and~Zhu~Han,~\textit{Fellow,~IEEE}
}% <-this % stops a space
\thanks{  
% This work was supported by  
% the National Natural Science Foundation of China (62401387, 62371011), 
% the Natural Science Foundation of Jiangsu Province (BK20240768),
% and partially supported by 
% the U.S. National Science Foundation (CNS--2107216, CNS--2128368, CNS--2107182, CMMI--2222810, ECCS-2302469, ECCS--2030029), 
% US Department of Transportation, Toyota and Amazon. (\textit{Corresponding author: Huan~Huang}).

%A portion of this work was published in~\cite{MyDISCOISACWCL}.
H.~Huang is with the School of Electronic and Information Engineering, Soochow University, Suzhou, Jiangsu 215006, China (e-mail: hhuang1799@gmail.com). %\{jide\_yuan, ljun\}@suda.edu.cn ylzheng@suda.edu.cn,  , D.~Zou, Y.~Cai, and G~Shen are ddzou@suda.edu.cn, yicai@ieee.org, shengx@suda.edu.cn lysun02@163.com; 15564267992@163.com; hhuang1799@gmail.com; yicai@ieee.org
 
H.~Zhang  is with the State Key Laboratory of Advanced Optical Communication Systems and Networks, School of Electronics, Peking University, Beijing 100871, China (email: hongliang.zhang92@gmail.com). 
%,State Key Laboratory of Advanced Optical Communication Systems and Networks,   boya.di@pku.edu.cn
%H.~Zhang is with the School of Electronics, Peking University, Beijing 100871, China (e-mail: hongliang.zhang92@gmail.com)  lingyang.song@pku.edu.cn

W.~Mei is with the National Key Laboratory of Wireless Communications, University of Electronic Science and Technology of China, Chengdu 611731, China  (e-mail: wmei@uestc.edu.cn).

M.~Min is with the School of Information and Control Engineering, 
China University of Mining and Technology, Xuzhou 221116, China (minmh@cumt.edu.cn).

% R.~Zhang is with the School of Science and Engineering, Shenzhen Research Institute of Big Data, Chinese University of Hong Kong, Shenzhen 518172, China, 
% and also with the Department of Electrical and Computer Engineering, National University of Singapore, Singapore 117583 (e-mail: rzhang@cuhk.edu.cn).
 
% D.~Niyato is with the College of Computing and Data Science, Nanyang Technological University, Singapore 639798 (e-mail: dniyato@ntu.edu.sg).

% A.~L.~Swindlehurst is with the Center for Pervasive Communications and Computing, University of California, Irvine, CA 92697, USA (e-mail: swindle@uci.edu).

Z.~Han is with the Department of Electrical and Computer Engineering at the University of Houston, Houston, TX 77004 USA  
%and also with the Department of Computer Science and Engineering, Kyung Hee University, Seoul, South Korea, 446-701.  
(email: hanzhu22@gmail.com).
}
}
\maketitle

\begin{abstract}
    Integrated sensing and communication (ISAC) is widely regarded as 
    one of the key enabling technologies for the future sixth generation (6G) 
    wireless communication systems. In this work, we investigate a bistatic ISAC system in the presence of a disco reconfigurable intelligent surface (DRIS), whose random and time-varying reflection coefficients emulate a ``disco ball''. The introduction of DRIS breaks the underlying assumption in existing ISAC systems that the sensing and communication channels remain static or  quasi-static within the channel coherence time.
    We first develop a bistatic system model 
    incorporating the DRIS and characterize all involved wireless channels.
    Then, an ISAC waveform design that balances sensing and communication performance is proposed by 
    formulating a Pareto optimization problem,
    where the trade-off is controlled through a tunable factor.
    Communication and sensing performance in the bistatic ISAC system 
    are quantified by the signal-to-interference-plus-noise ratio (SINR) 
    and the Cram$\acute{\rm e}$r-Rao lower bound (CRLB), respectively.
    To quantify the impact of DRIS on the bistatic ISAC system,
    we derive the statistical characteristics  of DRIS-induced active channel aging (ACA) channels for communications 
    and the cascaded DRIS-based sensing channel.
    Then, we establish a theoretical lower bound on the SINR and closed-form CRLB expressions in the presence of a DRIS.
    The analysis reveals several distinctive properties of DRIS in bistatic ISAC systems. 
    In particular, the DRIS degrades communication performance significantly
    due to the introduction of ACA interference.
    In contrast, with respect to sensing performance, 
    the DRIS decreases the estimation accuracy of the angle of departure (AoD) 
    while concurrently enhancing that of the angle of arrival (AoA).
    Numerical results validate the derived theoretical analysis and confirm these DRIS-induced behaviors.
\end{abstract}

\begin{IEEEkeywords}
Channel aging, Cram$\acute{{\bf e}}$r-Rao lower bound, integrated sensing and communication, physical-layer security,
reconfigurable intelligent surface.
\end{IEEEkeywords}

\section{Introduction}\label{Intro}
Integrated sensing and communication (ISAC) has attracted increasing attention 
as a promising technology for future sixth-generation (6G) wireless systems~\cite{ISACMag,ISACiotj}. 
In particular, ISAC implements joint target sensing and data communication functionalities
using the same radio-frequency (RF) hardware and computing platform~\cite{HLZhangISAC,HeathISAC}, 
which offers an exciting opportunity to implement sensing using existing wireless communication infrastructure~\cite{LiuJSAC}.
The added sensing functionality enabled by the environmental awareness makes ISAC a fundamental component of future 6G smart environments. 
Therefore, both industry and academia have undertaken extensive research into ISAC-related issues, including ISAC waveform design~\cite{FLiuISAC,FLiuISAC1} 
and the theoretical analysis of ISAC waveform~\cite{MUIImpact,CRLBISAC},
and the sensing performance metrics, i.e., mean squared error (MSE) or the Cram$\acute{\rm e}$r-Rao lower bound (CRLB) are 
widely used~\cite{CRLBISACLS}.

To further enhance the performance of ISAC systems, reconfigurable intelligent surfaces (RISs) have been anticipated to play a critical role~\cite{AORIS,MyMultiRIS,RISExp}. These surfaces consist of a large number of reflective elements whose reflection coefficients can be adjusted using simple PIN or varactor diodes~\cite{IRSCuiTJ,IRSsur1,IRSsur2}. By properly configuring these coefficients, the electromagnetic environment can be reshaped to improve both sensing and communications~\cite{LSWCMISAC}.
For example, the RISs can provide extra wireless links when direct links between an ISAC base station (BS) 
and sensing and communication users are blocked~\cite{ExtraPathISAC,CRLBISACLS}. Moreover, the authors in~\cite{STARISAC} proposed a simultaneously transmitting and reflecting RIS-assisted ISAC scheme, where the simultaneously transmitting and reflecting RIS partitions the entire wireless space into a sensing space and a communication space, respectively.

However, the performance enhancement enabled by  RISs relies on the underlying assumption that 
the sensing and communication channels remain static or quasi-static within the channel coherence time. 
This is because the RIS coefficients and the ISAC waveform are designed based on the estimated channel state information (CSI).
While this underlying assumption is normally valid, it can be negated 
when a so-called disco RIS (DRIS) is deployed~\cite{MyWCMag}.
The concept of DRIS was first introduced in the context of a fully-passive jammer (FPJ)~\cite{MyTVT}, where the DRIS with time-varying and random reflection coefficients acts like a ``disco ball.''
Consequently, active channel aging (ACA) interference is introduced, which
causes the wireless channels to change more rapidly than their channel coherence time~\cite{DIRSTWC,TWCAnti,TWCDIOS}.
The ACA interference imposed by DRISs is different from 
the channel aging (CA) interference in traditional MU-MISO systems, 
which is caused by time variations in wireless propagation and delays in computation 
between the time the channels are learned and when they are used for precoding~\cite{ChanAge}.
% This type of ACA interference is referred to as DISCO jamming attacks.

Prior works have extensively investigated the impact of DRIS-based FPJs on pure communication systems.
In~\cite{DIRSTWC}, the authors quantified the performance loss due to the DRIS-based fully-passive jamming attacks by deriving the lower bound of ergodic achievable uplink rate.
Furthermore, they showed some interesting properties of this fully-passive jamming. 
For example, unlike the classical active jammer (AJ), which inflicts intentional interference to block the communication,
the ACA impact caused by the DRIS-based FPJ cannot be mitigated by classical anti-jamming approaches 
such as increasing the transmit power of the legitimate system. 
In~\cite{TWCAnti}, the authors provided that the fully-passive jamming launched by the DRIS-based FPJ
behaves like additive Gaussian white noise (AWGN) 
when the number of DRIS reflective elements is massive.
Based on this AWGN property, 
an anti-jamming scheme was designed  
by leveraging the statistical characteristics:
 the channel response of the cascaded DRIS-based channel 
is approximated as complex Gaussian~\cite{TWCAnti}.

In addition, some studies have investigated the use of DRISs to break key consistency~\cite{OtherDIRS1,OtherDIRS2},
where the channel reciprocity in time division duplex (TDD) systems 
is destroyed by the ACA  to make the channel reciprocity based key generation fail.
Consequently, the DRIS concept has been extended to beyond-diagonal RISs (BD-RISs)
by employing non-reciprocal connections between their elements~\cite{LeeDRIS},
and to omni-RISs by randomly adjusting their refractive and reflective coefficients~\cite{TWCDIOS}.
Moreover, the authors in~\cite{CCHuang} have 
shown that a DRIS deployed by the warden Willie improves his detection error probabilities 
while jamming covert transmissions between Alice and Bob.

Despite these advances, existing ISAC studies predominantly focus on performance improvement 
under the assumption of the channel reciprocity in TDD systems~\cite{LSWCMISAC,ExtraPathISAC,CRLBISACLS,STARISAC,RISISACadd1}, 
largely overlooking the potential threats posed by DRISs. 
The introduction of a DRIS breaks the  underlying assumption in ISAC systems, 
raising significant physical-layer security concerns.
Although~\cite{MyDISCOISACWCL} provided a preliminary illustration of the DRIS impact on a monostatic ISAC system, it has several limitations. 
First, it relies solely on Monte Carlo simulations to demonstrate the DRIS impact, lacking a rigorous theoretical quantification of how DRIS degrades system performance. 
Second, it is limited to a simple monostatic sensing scenario, leaving the more challenging and general case of bistatic sensing~\cite{BistaticRadar} unexplored. 
Consequently, it remains unclear whether and how the presence of a DRIS degrades the performance of bistatic ISAC systems, particularly regarding the sensing accuracy in terms of angle-of-departure (AoD) and angle-of-arrival (AoA) estimation.

Motivated by the above facts,  we investigate and quantify the impact of DRIS on bistatic ISAC systems in this work. 
The main contributions are summarized as follows:
\begin{itemize}
\item 
We first model a bistatic ISAC system in the presence of a DRIS,
where the time-varying reflection coefficients of the DRIS are randomly generated by an independent DRIS controller 
and are uncorrelated with the bistatic ISAC system. 
It is worth noting that 
 the DRIS can be implemented without relying on either CSI of wireless channels or additional transmit power.
We then characterize all involved wireless channels.
The DRIS introduces ACA into both bistatic sensing and communication channels, 
even within their channel coherence intervals.
As a result, the underlying assumption in bistatic ISAC systems 
that the sensing and communication channels remain static or quasi-static over 
the channel coherence time is violated.
\item  
Based on the built bistatic ISAC model, we design an ISAC waveform by formulating a Pareto optimization problem that explicitly balances sensing and communication performance.
By solving this problem for different trade-off factors, 
the resulting ISAC waveform achieves different trade-offs between the sensing and communication performance
under the same total transmit power.
For a given ISAC waveform, SINR is used to demonstrate the impact of DRIS-induced ACA interference on 
the communication performance of the bistatic ISAC system.
Moreover, we employ CRLB to characterize the influence of DRIS on the AoD and AoA estimates.
\item  
In order to analytically quantify the impact of DRIS on the bistatic ISAC system,
we first derive the statistical characteristics of the DRIS-induced ACA channel for communications 
and that of the cascaded DRIS-based sensing channel.
Based on the derived statistical characteristics,
we establish a theoretical lower bound of SINR and derive the closed-form  CRLB
for the AoD and AoA  estimates in the presence of a DRIS. 
Numerical results are provided to validate the tightness and accuracy of the derived SINR lower bound and the CRLB expressions.
\item
According to the derivations about SINR and CRLB,
we demonstrate some interesting properties of the DRIS impact on the bistatic ISAC system.
For instance,
the DRIS degrades communication performance significantly due to the
introduction of ACA interference.
In contrast, with respect to sensing performance, 
the DRIS decreases the estimation accuracy
of the AoD while improving that of the AoA.
These properties indicate that the introduction of DRIS in bistatic ISAC systems may simultaneously degrade communication reliability 
and reshape sensing accuracy in a direction-dependent manner.
\end{itemize}

The rest of this paper is organized as follows.
In Section~\ref{Princ}, we first illustrate
the implementation of the DRIS and formulate the ACA channel induced by the DRIS, where the time-varying reflection coefficients of the DRIS are generated by an independent DRIS controller without any coordination with the bistatic ISAC system. In Section~\ref{DIRSFPJComm}, we present the bistatic sensing model and the communication model in the presence of a DRIS, where channel models of all wireless channels involved are built.
In Section~\ref{ISACDesign}, we first formulate the ISAC waveform design problem in the presence of a DRIS 
and then introduce several trade-off factors to balance sensing and communication performance.
A theoretical lower bound of SINRs is derived to quantify the DRIS-induced ACA interference 
on the communication performance of the bistatic ISAC system.
Moreover,  we develop the CRLB to characterize 
the sensing performance in the presence of a DRIS.
Simulation results are presented in Section~\ref{ResDis} to demonstrate the impact of DRIS on the bistatic ISAC system and to validate the theoretical derivations.
Finally, conclusions are given in Section~\ref{Conclu}.

\emph{Notation:} We employ bold capital letters for a matrix, e.g., ${\bf{X}}$, lowercase bold letters for a vector, e.g., ${\boldsymbol x}_l$, 
and italic letters for a scalar, e.g., $k$. 
The superscripts $(\cdot)^{T}$ and $(\cdot)^{H}$ represent the transpose and the Hermitian transpose,  
and the symbols $\|\cdot\|_{\rm F}$ and $|\cdot|$  represent the Frobenius norm and the absolute value. %Moreover,where ${\mathop{\rm Re}\nolimits} \left({\cdot}\right)$ indicates the real part.
Moreover, $\mathbb{E}\!\left[\cdot\right]$ stands for the expectation.

\section{System Description}\label{Princ}
In this section,  we first illustrate
the implementation and design constraints of the DRIS,
and  formulate the ACA channel induced by the DRIS in Section~\ref{DISCOJamm}.
The reflection coefficients of the DRIS are random and time-varying, 
which are generated by an independent DRIS controller and are independent of the bistatic ISAC system.
In Section~\ref{DIRSFPJComm}, we present the bistatic sensing model and the communication model in the presence of a DRIS, 
where all involved wireless channels are characterized.

\subsection{Disco RIS-Based Active Channel Aging}\label{DISCOJamm}
\begin{figure}[!t]
    \centering
    \includegraphics[scale=0.62]{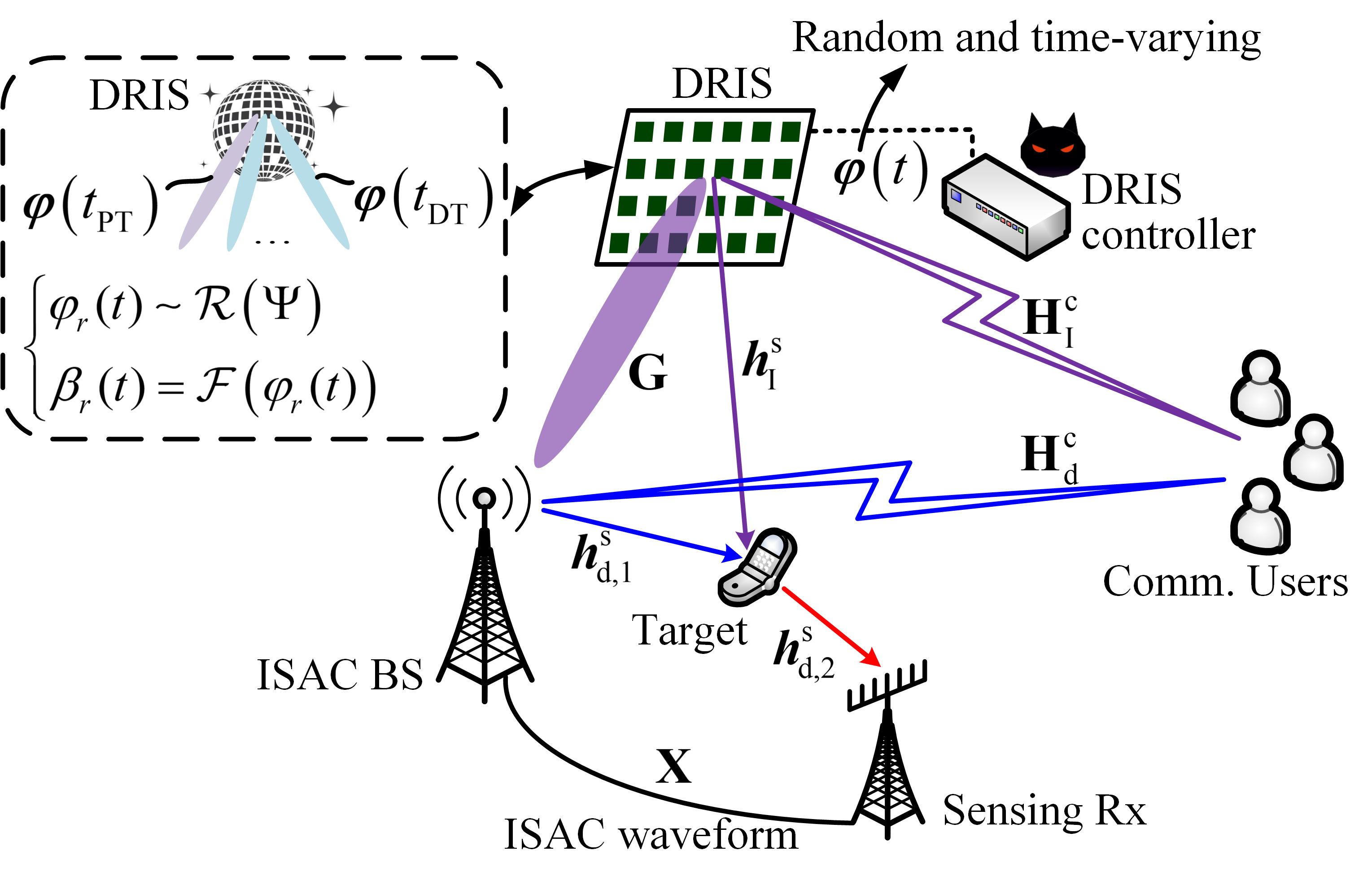}
    \caption{Illustration of a bistatic ISAC system with a bistatic radar configuration
    in the presence of a DRIS,
    where the time-varying DRIS reflection coefficients are randomly generated by the DRIS controller.}
    \label{fig1}
\end{figure}

Fig.~\ref{fig1} schematically illustrates a bistatic ISAC system 
based on bistatic sensing~\cite{MUIImpact,BistaticRadar} 
in the presence of a DRIS,
where the ISAC BS equipped with an $N_{\rm B}$-element uniform linear array (ULA) 
communicates with $K_{\rm c}$ single-antenna communication users 
while simultaneously detecting a target in a collaborative bistatic radar setting.
Furthermore, the sensing receiver is equipped with an $N_{\rm S}$-element ULA.
In this work, 
we adopt a robust assumption that
the DRIS-based FPJ has no information about the communication users, such as their location knowledge.
Therefore, we assume that the DRIS is deployed near the ISAC BS to maximize its impact~\cite{DIRSTWC,TWCAnti}.
Furthermore, the DRIS with $N_{\rm D} = N_{{\rm D},h} \times N_{{\rm D},v}$ reflective elements is implemented using PIN diodes, whose ON/OFF behavior permits only discrete phase shifts and amplitude values.
More specifically, the discrete reflective phase shifts are chosen from a set 
$\Psi = \left\{ {\phi_{1}},\ldots,  {\phi_{2^{b}}} \right\} \subset  [-\pi,\pi]$
 with $b$-bit quantization phase shifts,
and the corresponding time-varying amplitude of the $r$-th DRIS reflective element  $\beta_r$ 
is a function of ${\varphi}_{ r}(t)$~\cite{MyDISCOISACWCL},
i.e., $\beta_r(t) = {\mathcal F}\left({\varphi}_{ r}(t)\right) \in {\Xi} = \left\{\mu_1, \ldots, \mu_{2^b}\right\} \subset  [0,1]$.
The time-varying phase shift of the $r$-th DRIS reflective element ${\varphi}_{ r}(t)$ ($r=1,\ldots, N_{\rm \!D})$  is randomly selected from $\Psi$,
and is assumed to follow a stochastic distribution termed as ${\varphi}_{ r}(t) \sim {\mathcal R}\!\left(\! \Psi\right)$.
As a result, the time-varying DRIS reflective vector ${\boldsymbol{\varphi} ({t})}$ displayed in Fig.~\ref{fig1}
is expressed by $\boldsymbol{\varphi}\left( {{t}} \right) = 
\left[ {{\beta _1}({t}){e^{j{\varphi _1}({t})}}, \ldots ,{\beta _{{N_{\rm{D}}}}}\!({t}){e^{j{\varphi _{{N_{\rm{D}}}}}({t})}}} \right]$.
In this work, we further impose a constraint on the stochastic distribution ${\mathcal R}$, 
i.e., $\mathbb{E}\!\left[{\beta _r}({t}){e^{j{\varphi _r}({t})}}\right] = 0$.
% Mathematically, we impose the following constraint on the DRIS:
% \begin{equation}
% \sum\limits_{i = 1}^{{2^b}} p_i \mu_i  e^{j{\phi_{i}}} = 0,
%     \label{DRISCon}
% \end{equation}
% where the probability that the $r$-th DRIS element takes ${\phi_{i}} \in \Psi$ is denoted as $p_i$, 
%  $i=1,\ldots, 2^b$, and $\sum\nolimits_{i = 1}^{{2^b}} p_i=1$.

In conventional TDD systems, the wireless channel is commonly assumed to be quasi-static over the channel coherence interval of duration $T_{\rm C}$.
Consequently, the CSI estimated during the pilot transmission (PT) phase
($0 < t_{\rm{P\!T}} \le T_{\rm{P}} \le T_{\rm{C}}$) can be exploited to design the waveform for the subsequent data transmission (DT) phase ($T_{\rm{P}} < t_{\rm{D\!T}} \le T_{\rm{C}}$).
Specifically, the ISAC BS first learns the CSI during the PT phase via methods 
such as the least square (LS) algorithm.
However, in the bistatic ISAC system in the presence of a DRIS,
the DRIS randomly changes its reflection coefficients
with a period about the same as the length of the PT phase~\cite{MyWCMag,TWCAnti}.
As a result, the ACA induced by the DRIS breaks the channel reciprocity.
Mathematically, the CSI$\footnote{We assume that perfect CSI is available as imperfect CSI is not the primary focus of our ISAC scenario, and its impact has been extensively studied~\cite{RefSLNR,RefSLNRadd}. Under this assumption, we can focus directly on the analytical results in the presence of a DRIS, which, to the best of our knowledge, has not been reported before.}$ 
of communication users estimated during the PT phase is written as
\begin{equation}
    {{\bf{H}}_{{\rm{\!P\!T}}}^{\rm c}} =  {\bf{H}}_{\rm{d}}^{\rm c} \!+\! {\bf{H}}_{\rm{D}}^{\rm c}\!\left(t_{\rm{P\!T}}\right) \\
     = {{\bf{H}}_{\rm{d}}^{\rm c} \!+\! {{\bf{G}} }{ \rm{diag}\!} \left( {\boldsymbol{\varphi} ({t_{\rm{P\!T}}})} \right){\! \bf{H}}_{\rm{I}}^{\rm c} },
    \label{PTCSI}
\end{equation}
where ${\bf{H}}_{\rm{d}}^{\rm c}$ 
represents the direct channel between the ISAC BS and the communication users,
%that is unchanged during the channel coherence time,
and ${\bf{H}}_{\rm{D}}^{\rm c}\!\left(t_{\rm{P\!T}}\right)$ denotes 
the DRIS-jammed communication channel between the ISAC BS and the communication users during the PT phase.
In~\eqref{PTCSI}, $\bf G$ and ${\bf H}_{\rm I}^{\rm c}$ denote the channels between the ISAC BS and the DRIS
and between the DRIS and communication users, respectively.

On the one hand, the DRIS is deployed close to the ISAC BS to 
make its impact as large as possible.
On the other hand, the DRIS should be embedded in massive reflective elements to cope with the multiplicative
propagation path loss of the DRIS-jammed communication channels~\cite{AORIS}.
Consequently, we model the channel $\bf G$ between the ISAC BS and the DRIS based on the near-field model~\cite{NearfieldMoRef,NearfieldMo2},
given by
\begin{equation}
    {{\bf{G}}} \!=\!\! {\sqrt{{\mathscr{L}}_{\rm G}}} {\widehat{\bf{G}}} 
    \!=\! \!{\sqrt{{\mathscr{L}}_{\rm G}}}\!\!\left(\!\!{\sqrt {\frac{\varepsilon }{{1\! +\! \varepsilon }}}{\widehat{\bf{G}}}^{{\rm{LOS}}}} 
    \!+\!
    {\sqrt {\frac{1}{{1 \!+\! \varepsilon }}}{\widehat{\bf{G}}}^{{\rm{NLOS}}}}\!\right),
\label{Ricianchan}
\end{equation}
where ${\mathscr{L}}_{\rm G}$ stands for the large-scale channel fading coefficient of $\bf G$, 
and $\varepsilon$ represents
the Rician factor of $\bf G$, i.e., the ratio of the signal power
in the line-of-sight (LOS) component to the scattered power
in the non-line-of-sight (NLOS) component.

In~\eqref{Ricianchan}, the NLOS component ${{\widehat {\bf{G}}}^{{\rm{NLOS}}}}$ is assumed to follow Rayleigh fading, 
with elements that satisfy $\left[{{\widehat {\bf{G}}}^{{\rm{NLOS}}}}\right]_{n,r} \sim \mathcal{CN}\left(0,1\right), n=1,\ldots, N_{\rm B}$ and $r=1,\ldots, N_{\rm D}$.
Meanwhile,
the LOS component ${{\widehat {\bf{G}}}^{{\rm{LOS}}}}$ is
modeled as~\cite{TWCAnti,NearfieldMoRef,NearfieldMo2,NearfieldMo1}
\begin{equation}
\left[{\widehat {\bf{G}}}^{{\rm{LOS}}}\right]_{n,r} = {e^{ - j\frac{{2\pi }}{\lambda }\left( {{D_n^r} - {D_n}} \right)}} ,
\label{GLOS}
\end{equation}
where $\lambda$ denotes the wavelength of the ISAC waveform $\bf X$, and ${D_n^r}$ and ${D_n}$ represent the distance between the $n$-th antenna and the $r$-th DRIS reflective element,
and the distance between the $n$-th antenna of the ISAC BS's ULA and the centre (origin) of the DRIS, respectively. 
Both the spacing between adjacent DRIS reflecting elements and the spacing between adjacent transmit antennas are set to $d ={\lambda}/2$.
Furthermore, the locations of the first transmit antenna and the first DRIS element are taken as the reference locations of the ISAC BS and the DRIS, respectively.

Moreover, ${\bf H}^{\rm c}_{\rm d}$ and ${\bf H}^{\rm c}_{\rm I}$ are constructed based on the far-field model. 
In other words, ${\bf H}^{\rm c}_{\rm d}$ and ${\bf H}^{\rm c}_{\rm I}$ are assumed to follow Rayleigh fading~\cite{BookFarFeild}. Specifically, we have
\begin{alignat}{1}
&{{\bf{H}}^{\rm c}_{\rm{I}}} = {\left({{\bf{L}}^{\rm c}_{\rm I}}\right)^{\frac{1}{2} }}  {\widehat {\bf{H}}^{\rm c}_{\rm{I}}}
= \left[ {{\sqrt{{{\mathscr{L}}^{\rm c}_{{\rm I},1}}}}{{\widehat {\boldsymbol{h}}}^{\rm c}_{{\rm I},1}},  \ldots ,{\sqrt{{{\mathscr{L}}^{\rm c}_{{\rm I},K}}}}{{\widehat {\boldsymbol{h}}}^{\rm c}_{{\rm I},K_{\rm c}}}} \right], \label{HIkeq}\\
&{{\bf{H}}^{\rm c}_{\rm{d}}} =\left({{\bf{L}}_{\rm d}^{\rm c}}\right)^{\frac{1}{2}} {\widehat {\bf{H}}^{\rm c}_{\rm{d}}} = \left[ {{\sqrt{{{\mathscr{L}}^{\rm c}_{{\rm d},1}}}}{{\widehat {\boldsymbol{h}}}^{\rm c}_{{\rm d},1}}, \ldots ,{\sqrt{{{\mathscr{L}}^{\rm c}_{{\rm d},K}}}}{{\widehat {\boldsymbol{h}}}^{\rm c}_{{\rm d},K_{\rm c}}}} \right],
\label{Hdkeq}
\end{alignat}
where the elements of the $K_{\rm c}\times K_{\rm c}$ diagonal matrices ${\bf{L}}^{\rm c}_{\rm I} = {\rm{diag}}\left({{\mathscr{L}}^{\rm c}_{{\rm I},1}}, \ldots,{{\mathscr{L}}^{\rm c}_{{\rm I},K_{\rm c}}}\right)$ 
and ${\bf{L}}^{\rm c}_{\rm d} = {\rm{diag}}\!\left(\!{{\mathscr{L}}^{\rm c}_{{\rm d},1}}, 
\ldots,{{\mathscr{L}}^{\rm c}_{{\rm d},K_{\rm c}}}\!\right)$ denote the large-scale channel fading coefficients, 
which are assumed to be independent~\cite{LDetector1}. 
Furthermore, the elements of ${\widehat {\bf{H}}^{\rm c}_{\rm{I}}}$ 
and ${\widehat {\bf{H}}^{\rm c}_{\rm{d}}}$ 
are assumed to be i.i.d. Gaussian random variables~\cite{BookFarFeild} 
and defined as $\left[{\widehat {\bf{H}}_{\rm{I}}}\right]_{r,k},\left[{\widehat {\bf{H}}_{\rm{d}}}\right]_{n,k} \sim \mathcal{CN}\left(0,1\right)$, where $r=1,\ldots, N_{\rm D}$, $n=1,\ldots, N_{\rm B}$, and $k =1,\ldots,K_{\rm c}$.

During the subsequent DT phase, the wireless channels of the communication users ${{\bf{H}}_{{\rm{\!D\!T}}}^{\rm c}}$ are written as
\begin{equation}
    {{\bf{H}}_{{\rm{\!D\!T}}}^{\rm c}} =  {\bf{H}}_{\rm{d}}^{\rm c} \!+\! {\bf{H}}_{\rm{D}}^{\rm c}\!\left(t_{\rm{D\!T}}\right)
     = {{\bf{H}}_{\rm{d}}^{\rm c} \!+\! {{\bf{G}} }{ \rm{diag}\!} \left( {\boldsymbol{\varphi} ({t_{\rm{D\!T}}})} \right){\! \bf{H}}_{\rm{I}}^{\rm c} },
    \label{DTCSI}
\end{equation}
where ${\bf{H}}_{\rm{D}}^{\rm c}\! (t_{\rm{D\!T}} )$ stands for the DRIS-jammed  communication channel 
between the ISAC BS and the communication users during the DT phase.

Note that the DRIS randomly changes its reflection coefficients,
and thus ${\bf{H}}_{\rm{D}}^{\rm c}\! (t_{\rm{P\!T}} )$ in~\eqref{PTCSI} and ${\bf{H}}_{\rm{D}}^{\rm c}\! (t_{\rm{D\!T}} )$ in~\eqref{DTCSI}
are different.
The update interval of the DRIS reflection coefficients is typically assumed to be on the same order as the duration of the PT phase~\cite{MyWCMag,TWCAnti}. 
Consequently, the ACA induced by the DRIS leads to much faster channel aging, reducing the effective channel coherence time to the same order as the PT duration, i.e., ${T_{\rm{C}}} \to {T_{\rm{P}}}$.
Specifically, we define the DRIS-based ACA channel as
\begin{equation}%{alignat}{1}
    {{\bf{H}}^{\rm c}_{{\rm{\!A\!C\!A}}}}  \!=\! {{\bf{H}}^{\rm c}_{{\rm{\!D\!T}}}}\!-\!{{\bf{H}}^{\rm c}_{{\rm{\!P\!T}}}}  
    \!=\!{{\bf{G}}}{ \rm{diag}\!} \left( {\boldsymbol{\varphi} ({t_{\rm{\!D\!T}}})} \!-\!{\boldsymbol{\varphi} ({t_{\rm{\!P\!T}}})} \right){\! \bf{H}}_{\rm{I}}^{\rm c}.
    \label{ACAEq}
\end{equation}

\subsection{ISAC Model in the Presence of a DRIS}\label{DIRSFPJComm}
\underline{\textit{Communication Model:}}
During the DT phase of the channel coherence time, 
the ISAC BS transmits $L$ constellation symbol vectors to the users, 
which is written as ${\bf S} = [{\boldsymbol s}_{ 1},\ldots, {\boldsymbol s}_{ L}] \in {\mathbb{C} }^{K_{\rm c} \times L}$.
Then, the corresponding $L$ received symbols at the users are denoted by ${\bf{Y}_{\rm c}} = [{\boldsymbol y}^{\rm c}_{1},\ldots, {\boldsymbol y}^{\rm c}_{L}] \in {\mathbb{C} }^{K_{\rm c} \times L}$.
More specifically, the $l$-th received symbol vector is expressed as
\begin{equation}%{alignat}{1}
    {\boldsymbol y}^{\rm c}_{l} = {\boldsymbol s}_{l} + \underbrace {\left( {{{\bf{H}}^{\rm c}_{{\rm{PT}}}}{{\boldsymbol x}_{ l}} - 
    {\boldsymbol s}_{ l}} \right)}_{\rm{MU\; interference}} + 
    \underbrace {{{\bf{H}}^{\rm c}_{{\rm{\!A\!C\!A}}}} {{\boldsymbol x}_{ l}}}_{\rm{ACA\; interference}} + {{\boldsymbol n}^{\rm c}_{ l}},
    \label{SigRx}
\end{equation}
where  ${{\boldsymbol x}_{ l}} \in {\mathbb{C} }^{N_{\rm B} \times 1}$ is 
the $l$-th column of an $\left(N_{\rm B} \times L\right)$-dimensional transmitted signal matrix
${\bf X}$ that is used as the ISAC waveform
for sensing and communication functions~\cite{FLiuISAC,FLiuISAC1},
and $ {{\boldsymbol n}^{\rm c}_{ l}}  \in \mathbb{C}^{K_{\rm c} \times 1}$ is AWGN 
with zero mean and variance $\sigma_{\rm c}^2$, 
i.e., ${{n}^{\rm c}_{k,l}} \sim \mathcal{CN}\left(0,\sigma_{\rm c}^2\right)$ ($k=1,\ldots,K_{\rm c}, l = 1,\ldots,L$).

% In an ISAC system, the ISAC waveform $\bf X$ is designed based on the CSI estimated during the PT phase~\cite{FLiuISAC}.
% Defining that ${{\bf{C}}_{\bf{X}}} = \frac{{\bf{X}}{{\bf{X}}^H}}{L}$, 
% ${\bf C}_{\bf X}$ needs to satisfy the following constraint to achieve the best sensing performance, i.e., 
% \begin{equation}%{alignat}{1}
%     {{\bf{C}}_{\bf{X}}} =  \frac{{{P_0} }}{{{N_{\rm{B}}}}}{{\bf{I}}_{{N_{\rm{B}}}}},
%     \label{ISACWaveLim}
% \end{equation} 
% where $P_0$ denotes the total transmit power of the data in each DT phase 
% and ${\bf I}_{N\!_{\rm B}}$ is an $N_{\rm B} \times N_{\rm B}$ identity matrix. 

From~\eqref{SigRx}, we can see that the DRIS imposes ACA interference
on the received symbols in addition to multi-user (MU)  interference.
Referring to~\cite{DIRSTWC,FLiuISAC,EquWir}, the signal-to-interference-plus-noise ratio (SINR)
at the $k_{\rm}$-th communication user is represented by
\begin{equation}%{alignat}{1}
    {\gamma _k} \!=\! \frac{{\mathbb{E}\!\left[ \!{{{\left| {{s_{k,l}}} \right|}^2}} \right]}}{{{\mathbb E}\!\left[\! \left|\!{\left(\! {{{ ( {\boldsymbol{h}_{{\rm{PT}},k}^{\rm{c}}} )}^H}{\boldsymbol{x}_l} \!-\! {s_{k,l}}} \right) 
    \!+\! {{ \left( \!{\boldsymbol{h}_{{\rm{A\!C\!A}},k}^{\rm{c}}} \right)}^H}{\boldsymbol{x}_l}} \right|^2\right] \!+\! {\sigma_{\rm c}^2}}}.
    \label{SINRDRIS}
\end{equation}
Consequently, the sum rate can be computed based on~\eqref{SINRDRIS}, i.e., $R_{\rm{sum}} = \sum\nolimits_{k = 1}^{K_{\rm c}} {{R_k}}  = \sum\nolimits_{k = 1}^{K_{\rm c}} {{{\log }_2}\left( {1 + {\gamma _k}} \right)}$.

It is worth noting that the strict equality constraint in~\eqref{ISACWaveLim} ensures that the ISAC waveform $\bf X$ 
possesses the same properties as the best sensing waveform.
However, this ISAC waveform $\bf X$ suffers from performance degradation
because the MU interference term, i.e., 
$({\bf H}^{\rm c}_{\rm{P\!T}}{{\boldsymbol x}_{ l}} - {{\boldsymbol s}_{ l}})$, 
is not taken into account.
Based on~\eqref{SINRDRIS}, ignoring the MU interference 
when designing $\bf X$ leads to a reduction in the communication performance metric, 
i.e., $R_{\rm{sum}}$.
Moreover, the DRIS excites an additional ACA interference term
${{\bf{H}}^{\rm c}_{{\rm{\!A\!C\!A}}}} {{\boldsymbol x}_{ l}}$,
which further degrades the communication performance.

\underline{\textit{Sensing Model:}}
To quantify the sensing performance,
we evaluate the estimation accuracy of the target angles~\cite{NearfieldMo1,RadarMer,RadarMer1}.
The $L$ symbol vectors are reflected by the target and then captured at the sensing receiver. %and denoted by ${{\bf{Y}}_{\rm{ s}}} = [{\boldsymbol y}^{\rm s}_{1},\ldots, {\boldsymbol y}^{\rm s}_{L}] \in {\mathbb{C} }^{N_{\rm S} \times L}$
Mathematically, the $l$-th received sensing signal vector is expressed as 
\begin{equation}%{alignat}{1}
 {{\boldsymbol{y}}_l^{\rm{ s}}} =  \chi\!\left({\boldsymbol{h}_{{\rm d},2}^{\rm s}}\right)^{\!H} \!\!\left( {\boldsymbol{h}_{{\rm d},1}^{\rm s}} 
 + {\boldsymbol{h}_{\rm D}^{\rm s}}(t_l)\right) \!
    {{\boldsymbol x}_{ l}} + {{\boldsymbol n}^{\rm s}_{ l}} ,
    \label{SensingChannel}
\end{equation}
where $l=1, \ldots, L$, and $0\le \!\chi \!\le 1$ denotes the reflection cross-section coefficient of the target. 
Here,
${\boldsymbol{h}_{{\rm d},1}^{\rm s}} \in \mathbb{C}^{1\times N_{\rm B}}$ represents the direct sensing path 
between the ISAC BS and the target, %=  [{{h}_{1,1}^{\rm s}}, \ldots,{{h}_{1,N_{\rm B}}^{\rm s}} ]^T
${\boldsymbol{h}}_{\rm{D}}^{\rm{s}}\!(t)\in \mathbb{C}^{1\times N_{\rm B}}$ %= [{{h}_{{\rm D},1}^{\rm s}}\!(t), \ldots,{{h}_{{\rm D},N_{\rm B}}^{\rm s}}\!(t) ]^T 
represents the time-varying DRIS-based sensing path,
${\boldsymbol{h}_{{\rm d},2}^{\rm s}}\in \mathbb{C}^{1\times N_{\rm S}}$ denotes the direct sensing path between 
the target and the sensing receiver, % [{{h}_{2,1}^{\rm s}}, \ldots,{{h}_{2,N_{\rm B}}^{\rm s}} ]^T
and ${{\boldsymbol n}^{\rm s}_{ l}}$ is AWGN with zero mean and variance $\sigma_{\rm s}^2$. 

% We assume that the direct sensing path between the ISAC BS and the target and that between the target and 
% the  sensing receiver are LoS. 
More specifically, ${\boldsymbol{h}_{{\rm d},1}^{\rm s}}$ and ${\boldsymbol{h}_{{\rm d},2}^{\rm s}}$ 
in~\eqref{SensingChannel} are modeled based on the ULA array response, 
which are given by~\cite{RadarMer1}
\begin{alignat}{1}
    {\boldsymbol{h}_{\rm{d},1}^{\rm s}}=\left[{h}_{1,1}^{\rm s},\ldots,{h}_{1,N\!_{\rm B}}^{\rm s}\right] 
    &= {\sqrt{{{\mathscr{L}}^{\rm s}_{{\rm {d}},1}}}}{ \boldsymbol{\alpha}}_{{\!N_{\rm B}}} \!\!\left({\theta_1}\right),\label{hds}\\
    {\boldsymbol{h}_{\rm{d},2}^{\rm s}}=\left[{h}_{2,1}^{\rm s},\ldots,{h}_{2,N\!_{\rm S}}^{\rm s}\right] 
    &= {\sqrt{{{\mathscr{L}}^{\rm s}_{{\rm {d}},2}}}}{ \boldsymbol{\alpha}}_{{\!N_{\rm S}}} \!\!\left({\theta_2}\right),\label{hd2s} 
\end{alignat}
where ${{\mathscr{L}} _{{\rm {d}},1}^{\rm s}}$ and ${{\mathscr{L}} _{{\rm {d}},2}^{\rm s}}$
 are the large-scale channel fading coefficients of ${\boldsymbol{h}_{\rm{d},1}^{\rm s}}$ and 
 ${\boldsymbol{h}_{\rm{d},2}^{\rm s}}$, respectively.
Furthermore, the steering vectors of the ISAC BS and the sensing receiver are respectively given by
\begin{alignat}{1}
    { \boldsymbol{\alpha}}_{{\!N_{\rm B}}} \!\!\left({\theta_1}\right) & =\!  
    {\left[ \!{1,{e^{j2\pi \Delta \!\sin \theta_1}}, \ldots ,{e^{j2\pi (N\!_{\rm B} \!-\! 1)\Delta \!\sin \theta_1}}} \right] },
    \label{LPA}\\
    { \boldsymbol{\alpha}}_{{\!N_{\rm S}}} \!\!\left({\theta_2}\right) & =\!  
    {\left[ \!{1,{e^{j2\pi \Delta \!\sin \theta_2}}, \ldots ,{e^{j2\pi (N\!_{\rm S} \!-\! 1)\Delta \!\sin \theta_2}}} \right] },
    \label{LPA2}
\end{alignat}
where $\theta_1$ and $\theta_2$ stand for the AoD from the ISAC BS
and the AoA at the sensing receiver, respectively.
Furthermore, $\Delta = \frac{d}{\lambda}$ represents the normalized antenna spacing of the ULAs,
where ${\lambda}$ is the wavelength
of the transmit ISAC waveform $\bf X$.
 
In~\eqref{SensingChannel}, ${\boldsymbol{h}}_{\rm{D}}^{\rm{s}}(t)$ 
can be further modeled as
\begin{equation}
    {\boldsymbol{h}}_{\rm{D}}^{\rm{s}}(t)  \!=\! {\bf G}{\rm diag}\!\left( {\boldsymbol{\varphi}}(t) \right) {{\boldsymbol{h}}_{\rm I}^{\rm s}} 
    \!=\!\! {\sqrt{{{\mathscr{L}}_{{\rm{G}}} }}}{\widehat {\bf{G}}}{\rm diag}\!\left(\! {\boldsymbol{\varphi}}(t)\right)\!{\sqrt{{{\mathscr{L}}_{{\rm{I}}}^{\rm s}}}} {\widehat{{\boldsymbol{h}}_{\rm I}^{\rm s}} } ,
    \label{hDs}
\end{equation}
where ${{\mathscr{L}}_{{\rm{I}}}^{\rm s}}$ is the large-scale channel fading coefficient of the path between the DRIS and the target.
For ease of presentation, we denote the cascaded large-scale channel fading coefficient of ${\boldsymbol{h}}_{\rm{D}}^{\rm{s}}\!(t)$
as ${{\mathscr{L}}_{{\rm{cas}}}^{\rm s}}={{\mathscr{L}}_{{\rm{G}}} }{{\mathscr{L}}_{{\rm{I}}}^{\rm s}}$.
Furthermore,  ${\widehat{{\boldsymbol{h}}_{\rm I}^{\rm s}} }$ in~\eqref{hDs}
can be built based on 
the uniform plane array (UPA) response, i.e., 
\begin{equation}%{alignat}{1}
    {\widehat{{\boldsymbol{h}}_{\rm I}^{\rm s}} }=
    {\boldsymbol{\alpha}}_{{\!N_{\rm D},h}} \!\left({\vartheta_h }\right)\otimes  {\boldsymbol{\alpha}}_{{\!N_{\rm D},v}} \!\left({\vartheta_v }\right),
    \label{hIs}
\end{equation}
where
${\vartheta_h }$ and ${\vartheta_v}$ stand for the horizontal and vertical
AoDs between the two-dimensional DRIS and the target, respectively, and 
$\otimes$  denotes the Kronecker product.

%Based on~\eqref{SensingChannel}, the sensing receiver estimates the AoD $\theta_1$ and the AoA $\theta_2$.
Without loss of generality, we focus on the estimation of the target's AoD and AoA.
As shown in~\eqref{SensingChannel}, however,
the time-varying DRIS-induced ACA term 
$\chi\!\left({\boldsymbol{h}_{{\rm d},2}^{\rm s}}\right)\!^{H}\!{\boldsymbol{h}_{\rm D}^{\rm s}}\!(t_{l}) {{\boldsymbol x}_l}$ 
is introduced in the presence of a DRIS, 
which influences the accuracy of the estimation of $\theta_1$ and $\theta_2$ from the 
2D-MUSIC algorithm compared to the case without a DRIS.

\section{Integrated Sensing and Communication in the Presence of a DRIS}\label{ISACDesign}
In Section~\ref{ProblemFor}, we first formulate the ISAC waveform design problem 
for the bistatic ISAC system in the presence of a DRIS.
Then, we propose an ISAC waveform optimization scheme for the bistatic ISAC system in the presence of the DRIS.
In Section~\ref{DRISJamm}, a theoretical lower bound of SINRs is derived to quantify the 
DRIS-induced ACA interference on the communication performance of the bistatic ISAC system.
Moreover, in Section~\ref{DRISJammSnesing}, 
we develop the CRLB to characterize 
the sensing performance in the presence of a DRIS.

\subsection{Problem Formulation and ISAC Waveform Design}~\label{ProblemFor}
According to~\eqref{SigRx}, the optimal ISAC waveform should jointly suppress both the MU interference and the ACA interference. 
However, due to the time-varying and random DRIS reflective vector ${\boldsymbol{\varphi} ({t})}$, the ISAC BS cannot estimate ${\bf H}^{\rm c}_{\rm{A\!C\!A}}$ effectively, which makes the aforementioned optimal design intractable in practice.
Therefore, the ISAC BS can only design the ISAC waveform by minimizing the MU interference. 
Mathematically, the ISAC waveform design problem is formulated as
\begin{alignat}{1}
    \left( {{\rm{P}}1} \right): &\mathop {\min}\limits_{{\bf{X}}} \left\| {{\bf{H}}_{{\rm{PT}}}^{\rm{c}}{\bf{X}} - {\bf{S}}} \right\|_{\rm{F}}^2 \label{P1forS}\\
    &{\rm{s}}.{\rm{t}}.\;\; {{\bf{C}}_{\bf{X}}} =  \frac{{{P_0} }}{{{N_{\rm{B}}}}}{{\bf{I}}_{{N_{\rm{B}}}}}
    \label{ISACWaveLim},
\end{alignat}
where $P_0$ denotes the total transmit power at the ISAC BS
  and ${\bf I}_{N\!_{\rm B}}$ is an $N_{\rm B} \times N_{\rm B}$ identity matrix. 

In $\left( {{\rm{P}}1} \right)$, the strict equality constraint in~\eqref{ISACWaveLim} ensures that the ISAC waveform possesses the same properties as the optimal sensing waveform. 
However, this constraint significantly degrades the communication performance. 
To balance the trade-off between the sensing performance and the communication rate,
we introduce a trade-off factor $\kappa$ ($0 \le \kappa \le 1$) into $\left( {{\rm{P}}1} \right)$.
Denoting the solution to $\left( {{\rm{P}}1} \right)$ as ${\bf X}_0$, the following Pareto optimization problem can be obtained:
\begin{alignat}{1}
    \left( {{\rm{P}}2} \right): &\mathop {\min}\limits_{{\bf{X}}} \kappa \left\| {{\bf{H}}_{{\rm{PT}}}^{\rm{c}}{\bf{X}} - {\bf{S}}} \right\|_{\rm{F}}^2 + (1-\kappa)\left\| {{\bf{X}} - {{\bf{X}}_0}} \right\|_{\rm{F}}^2 \\
    &\,{\rm{s}}.{\rm{t}}.\; {\rm{tr}}\!\left({\bf{X}}{\bf{X}}^H\!\right) = P_0L .
    \label{P1forISA}
\end{alignat}
For different $\kappa$, the solution to $\left( {{\rm{P}}2} \right)$ 
yields different trade-offs between sensing and communication performance.
In particular, a smaller $\kappa$
prioritizes sensing performance at the cost of communication performance.

To solve $\left( {{\rm{P}}2} \right)$, we first obtain ${\bf X}_0$ by solving $\left( {{\rm{P}}1} \right)$. 
Although $\left( {{\rm{P}}1} \right)$ is a non-convex problem,  it is a classical orthogonal Procrustes problem~\cite{P1slove}.
Therefore, a closed-form solution to $\left( {{\rm{P}}1} \right)$  can be derived based on singular value decomposition (SVD), i.e.,
\begin{equation}%{alignat}{1}
    {{\bf{X}}_0} = \sqrt {\frac{{{P_0}L}}{N_{\rm B}}} {\bf{U}}{{\bf{I}}_{N_{\rm B} \times L}}{\!{\bf{V}}^H}
    = \left[{\boldsymbol{x}}_{0,1},\ldots, {\boldsymbol{x}}_{0,L}\right],
    \label{ForX0}
\end{equation}
where ${\bf U}\in {\mathbb{C}}^{N_{\rm B}\times N_{\rm B}}$ and ${\bf V}\in {\mathbb{C}}^{L \times L}$ 
are the left and right singular value matrices of $\left({\bf H}^{\rm c}_{\rm{P\!T}}\right)^H{\bf S}$, 
such that ${\bf U} {\boldsymbol \Sigma} {\bf V}^H= \left({\bf H}^{\rm c}_{\rm{P\!T}}\right)^H{\bf S}$
with a (rectangular) diagonal matrix ${{\boldsymbol \Sigma}}\in {\mathbb{R}}^{N_{\rm B}\times L}$ 
containing the singular values.

To facilitate the solution of $\left( {{\rm{P}}2} \right)$, we reformulate it into the following form 
using a small trick in~\cite{FLiuISAC},
\begin{alignat}{1}
    \left( {{\rm{P}}2}-{{\rm{E}}1} \right): &\mathop {\min}\limits_{{\bf{X}}} \left\| {{\bf{AX}} - {\bf{B}}} \right\|_{\rm{F}}^2 \\
    \nonumber
    &{\rm{s}}.{\rm{t}}.~\eqref{P1forISA},
\end{alignat}
where  ${\bf A} = {\!\left[\! {\sqrt \kappa  {{({\bf{H}}_{{\rm{PT}}}^{\rm{c}})}^T},\sqrt {1 \!-\! \kappa } {{\bf{I}}_N}} \!\right]\!^T}$ and ${\bf{B}} = {\left[\! {\sqrt \kappa  {{\bf{S}}^T},\sqrt {1 \!-\! \kappa } {\bf{X}}_0^T}\! \right]\!^T}$.

Although $\left( {{\rm{P}}2}-{{\rm{E}}1} \right)$ is a non-convex quadratically constrained quadratic programming (QCQP) problem, it can be equivalently reformulated as a semidefinite program (SDP) via semidefinite relaxation (SDR).
Moreover, $\left( {{\rm{P}}2}-{{\rm{E}}1} \right)$ involves only one quadratic constraint, i.e.,~\eqref{P1forISA}.
Therefore, the resulting rank-one solution is globally optimal~\cite{TightSDR}.
% Based on the solution $\bf X$ from $\left( {{\rm{P}}2}-{{\rm{E}}1} \right)$, 
% we can define the resulting covariance matrix
%  ${\widetilde {\bf{C}}_{\bf X}} = \frac{{\bf X}{\bf X}^H}{L}$,
% which in general does not satisfy the constraint~\eqref{ISACWaveLim}.
 
\subsection{Impact of DRIS on Communications}~\label{DRISJamm}
For a given ISAC waveform $\bf X$, the sum rate 
is affected not only by the MU interference resulting from the sensing functionality
 but also by the ACA interference induced by the DRIS.
The impact of MU interference has been extensively investigated and quantified in the existing ISAC literature,
e.g., \cite{FLiuISAC,FLiuISAC1}.
Therefore, in this section, we focus on characterizing the impact of the DRIS-induced ACA interference on 
the communication performance of the bistatic ISAC system.
To facilitate the analysis, 
we rewrite the aggregate interference term in the denominator of the SINR in~\eqref{SINRDRIS} as 
\begin{alignat}{1}%{alignat}{1}
    \nonumber
    &{\mathscr{I}} \! = \left| {{{ ( {\boldsymbol{h}_{{\rm{PT}},k}^{\rm{c}}} )}^H}{\boldsymbol{x}_l} \!-\! {s_{k,l}}} \right|^2 \!\!+\! 
    \left({{{ ( {\boldsymbol{h}_{{\rm{PT}},k}^{\rm{c}}} )}^H}{\boldsymbol{x}_l} \!-\! {s_{k,l}}} \right){\boldsymbol{x}^H_l}{\boldsymbol{h}_{{\rm{ACA}},k}^{\rm{c}}} \\
    &\; +  \!\!({{{\! (\! {\boldsymbol{h}_{{\rm{PT}},k}^{\rm{c}}} )}^H}{\!\boldsymbol{x}_l} \!-\! {s_{k,l}}}  )^H{{ \!( {\boldsymbol{h}_{{\rm{ACA}},k}^{\rm{c}}} )}\!^H}\!{\boldsymbol{x}_l} \!+ \!\!\left|\!{{ ( {\boldsymbol{h}_{{\rm{ACA}},k}^{\rm{c}}} )}\!^H}{\boldsymbol{x}_l} \right|^2.
    \label{RewInterf}
\end{alignat}

With the exception of the first term in~\eqref{RewInterf}, 
the remaining three terms are generated from the DRIS-induced ACA channel ${\bf{H}}^{\rm c}_{\rm{A\!C\!A}}$,
thereby capturing the ACA interference impact of the DRIS. 
To evaluate the expectation of these ACA-related 
interference terms, 
we first characterize the statistical distribution of ${\bf{H}}^{\rm c}_{\rm{A\!C\!A}}$.
% Note that the DRIS requires a large number of reflective elements to 
% overcome the multiplicative propagation loss in the cascaded DRIS-based channels. 
Specifically, the i.i.d. elements of ${\bf{H}}^{\rm c}_{\rm{A\!C\!A}}$ have 
the statistical characteristics given
in Proposition~\ref{Proposition1}.
\newtheorem{proposition}{Proposition}
\begin{proposition}
    \label{Proposition1}
    The elements of ${\bf{H}}^{\rm c}_{\rm {\!A\!C\!A}}$ converge in distribution to $\mathcal{CN}\!\left( {0,  {{{\mathscr{L}}\!_{{\rm {cas}},k}} {N\!_{\rm D}} {\overline \mu} } } \right)$ as $N_{\rm D} \to \infty$, i.e.,
    \begin{equation}
        {\left[ {{\bf H}^{\rm c}_{\rm {\!A\!C\!A}}} \right]_{n,k}} \mathop  \to \limits^{\rm{d}} \mathcal{CN}\!\!\left( {0,  {{{\mathscr{L}}^{\rm c}_{{\rm {cas}},k}} {N\!_{\rm D}}{\overline \mu} } } \right), \forall n,k,
        \label{HDSta}
    \end{equation}
where ${{\mathscr{L}}^{\rm c}_{{\rm {cas}},k}} = {{\mathscr{L}}_{{\rm {G}}}} 
{{\mathscr{L}}^{\rm c}_{{\rm {I}},k}}$ represents the cascaded large-scale channel
fading coefficient of the DRIS-based channel between the ISAC BS and the $k$-th communication user,
$\overline \mu = \sum\nolimits_{i1 = 1}^{{2^b}} \sum\nolimits_{i2 = 1}^{{2^b}}{p_{i1}}{p_{i2}} \big( {\mu _{i1}^2} + {\mu _{i2}^2} - 2{\mu _{i1}}{\mu _{i2}}$ $\cos ( {{\phi_{i1}} - {\phi_{i2}}} ) \big)$,
    ${\mu _{i1}}$, ${\mu _{i2}} \in \Xi$, ${\phi_{i1}}$, ${\phi_{i2}} \in \Psi$, and ${p_{i1}}, {p_{i2}}$ 
are the probabilities of taking the random phases ${\phi_{i1}}, {\phi_{i2}}$.
\end{proposition}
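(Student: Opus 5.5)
Write the $(n,k)$ entry of \eqref{ACAEq} as a sum over DRIS elements,
\begin{equation*}
\left[{\bf H}^{\rm c}_{\rm ACA}\right]_{n,k} = \sqrt{{\mathscr L}_{\rm G}{\mathscr L}^{\rm c}_{{\rm I},k}}\sum_{r=1}^{N_{\rm D}} \big[\widehat{\bf G}\big]_{n,r}\,\Delta\varphi_r\,\big[\widehat{\boldsymbol h}^{\rm c}_{{\rm I},k}\big]_r ,\qquad \Delta\varphi_r = \beta_r(t_{\rm DT})e^{j\varphi_r(t_{\rm DT})}-\beta_r(t_{\rm PT})e^{j\varphi_r(t_{\rm PT})}.
\end{equation*}
The plan is to show that the summands $z_r=[\widehat{\bf G}]_{n,r}\Delta\varphi_r[\widehat{\boldsymbol h}^{\rm c}_{{\rm I},k}]_r$ are i.i.d. in $r$, zero mean, circularly symmetric, with finite variance. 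Then apply the Lindeberg--L\'evy central limit theorem to the real and imaginary parts jointly, i.e. to the two-dimensional real vector. This gives a complex Gaussian limit whose variance is $N_{\rm D}$ times the per-element variance. I take the statement in the usual engineering sense that $\sum_r z_r/\sqrt{N_{\rm D}}$ converges to a Gaussian, and I rescale back by $\sqrt{N_{\rm D}}$.

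\textbf{Independence.} The phases at $t_{\rm PT}$ and $t_{\rm DT}$ are drawn independently by the DRIS controller, independently across $r$, and independently of all channels. $\widehat{\boldsymbol h}^{\rm c}_{{\rm I},k}$ has i.i.d. $\mathcal{CN}(0,1)$ entries independent of $\widehat{\bf G}$. So, conditioned on $\widehat{\bf G}$, the $z_r$ are independent. Their unconditional identical distribution follows because $[\widehat{\boldsymbol h}^{\rm c}_{{\rm I},k}]_r$ is circularly symmetric. Multiplying it by the independent factor $[\widehat{\bf G}]_{n,r}\Delta\varphi_r$ therefore yields a circularly symmetric variable, and only the modulus of that factor matters. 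The modulus $|[\widehat{\bf G}]_{n,r}|=1$ for the LOS part; for the Rician mixture its second moment is $\frac{\varepsilon}{1+\varepsilon}+\frac{1}{1+\varepsilon}=1$.

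\textbf{Moments.} The mean vanishes since $\mathbb E[\widehat h]=0$; the constraint $\mathbb E[\beta e^{j\varphi}]=0$ also gives $\mathbb E[\Delta\varphi_r]=0$. The pseudo-variance vanishes by circular symmetry. The variance factorizes by independence as $\mathbb E|[\widehat{\bf G}]_{n,r}|^2\,\mathbb E|\Delta\varphi_r|^2\,\mathbb E|\widehat h|^2 = \mathbb E|\Delta\varphi_r|^2$. Expanding $|\mu_{i1}e^{j\phi_{i1}}-\mu_{i2}e^{j\phi_{i2}}|^2=\mu_{i1}^2+\mu_{i2}^2-2\mu_{i1}\mu_{i2}\cos(\phi_{i1}-\phi_{i2})$ and averaging over the independent draws with probabilities $p_{i1}p_{i2}$ gives exactly $\overline\mu$. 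Summing $N_{\rm D}$ such terms and multiplying by ${\mathscr L}_{\rm G}{\mathscr L}^{\rm c}_{{\rm I},k}={\mathscr L}^{\rm c}_{{\rm cas},k}$ yields the variance ${\mathscr L}^{\rm c}_{{\rm cas},k}N_{\rm D}\overline\mu$ in \eqref{HDSta}. This holds for every $n$, since the $n$-dependence enters only through $|[\widehat{\bf G}]_{n,r}|$.

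\textbf{Main obstacle.} The delicate point is that the near-field LOS entries \eqref{GLOS} are deterministic and vary with $r$, so the summands are not literally identically distributed if $\widehat{\bf G}$ is held fixed. I handle this either by the circular-symmetry argument above, which makes the distribution of $z_r$ independent of the phase of $[\widehat{\bf G}]_{n,r}$, or, failing that, by invoking the Lindeberg CLT for independent non-identical summands. Lindeberg's condition is easy: all fourth moments are uniformly bounded, because $\beta_r\le1$, the entries of $\widehat{\bf G}$ have bounded moments, and Gaussian moments are finite. That suffices via Lyapunov's condition. The entries are then identically distributed across $(n,k)$ up to the scaling ${\mathscr L}^{\rm c}_{{\rm cas},k}$. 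Distinct entries are uncorrelated: different $k$ use independent $\widehat{\boldsymbol h}$, and different $n$ use $\widehat{\bf G}$ rows that are independent in the NLOS part. This justifies calling the entries i.i.d. in the asymptotic sense stated before the proposition.
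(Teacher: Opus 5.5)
Your proof is correct and follows essentially the same route as the paper's Appendix~A: expand $[\mathbf{H}^{\rm c}_{\rm ACA}]_{n,k}$ as a sum over DRIS elements, obtain zero mean from independence, get the per-element variance $\overline{\mu}$ by expanding $|\mu_{i1}e^{j\phi_{i1}}-\mu_{i2}e^{j\phi_{i2}}|^2$, and apply the Lindeberg--L\'evy CLT after normalizing by $\sqrt{N_{\rm D}}$; you differ only in folding the LOS and NLOS parts into one Rician entry of unit second moment, and your circular-symmetry argument for $[\widehat{\boldsymbol{h}}^{\rm c}_{{\rm I},k}]_r$ actually justifies the identically-distributed summands despite the $r$-dependent near-field LOS phases, a point the paper passes over.

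One correction to your closing paragraph, which goes beyond what the proposition asserts: entries in the same column are not uncorrelated across $n$, because the shared deterministic LOS rows give $\mathbb{E}\{[\mathbf{H}^{\rm c}_{\rm ACA}]_{n,k}[\mathbf{H}^{\rm c}_{\rm ACA}]_{n',k}^{*}\}=\frac{\varepsilon}{1+\varepsilon}\mathscr{L}^{\rm c}_{{\rm cas},k}\overline{\mu}\sum_{r=1}^{N_{\rm D}}e^{-j\frac{2\pi}{\lambda}(D_n^r-D_n-D_{n'}^r+D_{n'})}$, which is generally nonzero (entries are uncorrelated across $k$, but across $n$ only when $\varepsilon=0$); this does not affect the marginal convergence claimed for each $(n,k)$.
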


\begin{IEEEproof}
    See Appendix~\ref{AppendixA}.
\end{IEEEproof}

Based on Proposition~\ref{Proposition1}, 
the impact of the DRIS-induced ACA interference 
is  quantified by Theorem~\ref{Theorem1} below. 
\newtheorem{theorem}{Theorem}
\begin{theorem}
\label{Theorem1}
In the presence of DRIS-induced ACA interference, the lower bound on the SINR received at the $k$-th communication user ${\gamma _k}$ for an ISAC waveform obtained from $\left( {{\rm{P}}2}-{{\rm{E}}1} \right)$  converges in distribution to 
\begin{alignat}{1}%{alignat}{1}
   & {\gamma _k}  \mathop  \to \limits^{\rm{d}} \frac{{\mathbb{E}\!\!\left[\! {{{\left|{{s_{k,l}}} \right|}^2}} \right]}}{{{\mathbb E}\!\!\left[\! \left|\! {\left( {{{ ( {\boldsymbol{h}_{{\rm{PT}},k}^{\rm{c}}} )}^H}{\boldsymbol{x}_l} - {s_{k,l}}} \right) }\! 
     \right|^2\right] \!\!+\!{\mathbb E}\!\!\left[\left|\! {\boldsymbol{h}_{{\rm{A\!C\!A}},k}^{\rm{c}}} {\boldsymbol{x}_{l}}\!\right|^2 \!\right]\! +\!{\sigma^2}}}, \label{xxxxx}\\
    & \;\; \ge  \frac{{\mathbb{E}\!\!\left[ \!{{{\left|{{s_{k,l}}} \right|}^2}} \right]}}{{{\mathbb E}\!\!\left[ \!\left|\! {\left( {{{ ( {\boldsymbol{h}_{{\rm{PT}},k}^{\rm{c}}} )}^H}{\boldsymbol{x}_l} - {s_{k,l}}} \right) }\!  \right|^2\right] 
    \!+ \!{ P_0 {{{\mathscr{L}}^{\rm c}_{{\rm {cas}},k}} {N\!_{\rm D}}{\overline \mu} }} +{\sigma^2}}}.
    \label{SINRDRISMath}
\end{alignat}
\end{theorem}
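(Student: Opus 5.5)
The plan is to start from the SINR expression in~\eqref{SINRDRIS} and expand its interference term as in~\eqref{RewInterf}. I will then show that, in expectation, the two cross terms vanish, which gives the first line~\eqref{xxxxx}. After that, Proposition~\ref{Proposition1} will be used to bound the remaining ACA power term by $P_0\mathscr{L}^{\rm c}_{{\rm cas},k}N_{\rm D}\overline\mu$, which gives~\eqref{SINRDRISMath}.

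\textbf{Step 1: the cross terms vanish.} The key observation is that $\boldsymbol{h}^{\rm c}_{{\rm ACA},k}$ depends on $\boldsymbol{\varphi}(t_{\rm DT})-\boldsymbol{\varphi}(t_{\rm PT})$.

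\begin{itemize}
\item The DT coefficients $\boldsymbol{\varphi}(t_{\rm DT})$ are drawn independently of everything the BS knows. These are $\mathbf{S}$, $\mathbf{H}^{\rm c}_{\rm PT}$, and hence $\mathbf{X}$ from $({\rm P}2)$--$({\rm E}1)$.
\item By the constraint $\mathbb{E}[\beta_r e^{j\varphi_r}]=0$, we have $\mathbb{E}[\boldsymbol{h}^{\rm c}_{{\rm ACA},k}\mid \mathbf{S},\mathbf{H}^{\rm c}_{\rm PT}] = -(\text{the PT part})$.
\end{itemize}

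This is the delicate point: the PT part is correlated with $\mathbf{X}$, since $\mathbf{X}$ is designed from $\mathbf{H}^{\rm c}_{\rm PT}$. I will handle it in the asymptotic regime that the statement's ``converges in distribution'' refers to. Following Proposition~\ref{Proposition1} and~\cite{TWCAnti}, as $N_{\rm D}\to\infty$ the ACA channel behaves as a zero-mean Gaussian vector. Its dependence on the single PT realization, compared with its own fresh randomness, becomes negligible, in the same sense as the AWGN-like property in~\cite{TWCAnti}. The argument then treats $\boldsymbol{h}^{\rm c}_{{\rm ACA},k}$ as asymptotically zero-mean and independent of $(\boldsymbol{x}_l,s_{k,l},\boldsymbol{h}^{\rm c}_{{\rm PT},k})$.

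Under this independence, conditioning on $(\boldsymbol{x}_l,s_{k,l},\boldsymbol{h}^{\rm c}_{{\rm PT},k})$ and taking the expectation over the ACA channel kills both cross terms. This yields~\eqref{xxxxx}. I expect this decoupling argument to be the main obstacle. My first attempt will be to write $\boldsymbol{\varphi}(t_{\rm DT})-\boldsymbol{\varphi}(t_{\rm PT})$ as a sum over $N_{\rm D}$ elements and show that the normalized correlation with $\mathbf{X}$ tends to zero by a CLT and law-of-large-numbers argument.

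\textbf{Step 2: bound the ACA power term.} Condition on $\boldsymbol{x}_l$ and use that the entries of $\boldsymbol{h}^{\rm c}_{{\rm ACA},k}$ are asymptotically i.i.d.\ $\mathcal{CN}(0,\mathscr{L}^{\rm c}_{{\rm cas},k}N_{\rm D}\overline\mu)$ by Proposition~\ref{Proposition1}. This gives
\begin{equation*}
\mathbb{E}\big[|(\boldsymbol{h}^{\rm c}_{{\rm ACA},k})^H\boldsymbol{x}_l|^2\big]\to \mathscr{L}^{\rm c}_{{\rm cas},k}N_{\rm D}\overline\mu\,\mathbb{E}\big[\|\boldsymbol{x}_l\|^2\big].
\end{equation*}
If the BS-side entries are correlated through $\mathbf{G}$ (for example, the near-field LoS part), I will bound the relevant covariance matrix by its per-entry variance times identity. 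Failing that, I will use Cauchy--Schwarz as a fallback, and treat the Proposition's per-entry variance as the effective scale. The second line~\eqref{SINRDRISMath} only requires an upper bound on this term, so such a bound suffices.

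\textbf{Step 3: the power constraint.} By~\eqref{P1forISA}, $\sum_l\|\boldsymbol{x}_l\|^2=P_0L$, so the average per-symbol power is $P_0$. Hence $\mathbb{E}[\|\boldsymbol{x}_l\|^2]\le P_0$ when averaging over $l$, which is how the SINR expectation is interpreted. Substituting this into the denominator, and using that the SINR is decreasing in the interference, yields the stated lower bound~\eqref{SINRDRISMath}. Finally, the continuous mapping theorem transfers convergence in distribution from the channel entries to the SINR expression.
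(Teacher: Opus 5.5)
Your skeleton is the same as the paper's. You expand the interference as in \eqref{RewInterf}, drop the two cross terms because the ACA channel is zero-mean and (treated as) independent of the waveform, and bound the ACA power entrywise using Proposition~\ref{Proposition1} and $\mathbb{E}[\|\boldsymbol{x}_l\|^2]=P_0$.

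\textbf{Step 1 is a genuine gap, and your proposed fix fails.} The paper does not close this gap either. It passes from $\mathbb{E}[(\cdot)\,\boldsymbol{x}_l^H\boldsymbol{h}^{\rm c}_{{\rm ACA},k}]$ to $\mathbb{E}[(\cdot)\,\boldsymbol{x}_l^H]\,\mathbb{E}[\boldsymbol{h}^{\rm c}_{{\rm ACA},k}]$, which simply assumes the independence.

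To see why the independence cannot be derived asymptotically, write $\boldsymbol{h}^{\rm c}_{{\rm ACA},k}=\boldsymbol{d}'_k-\boldsymbol{d}_k$. Here $\boldsymbol{d}_k={\bf G}\,{\rm diag}(\boldsymbol{\varphi}(t_{\rm PT}))[{\bf H}^{\rm c}_{\rm I}]_{:,k}$, and $\boldsymbol{d}'_k$ is the same with $t_{\rm DT}$.
\begin{itemize}
\item $\boldsymbol{d}_k$ is exactly the DRIS component of $\boldsymbol{h}^{\rm c}_{{\rm PT},k}$, from which $\mathbf{X}$ is computed.
\item It has the same per-entry variance $\mathscr{L}^{\rm c}_{{\rm cas},k}N_{\rm D}\overline\nu$ as the fresh part $\boldsymbol{d}'_k$. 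Since $\overline\mu=2\overline\nu$ under $\mathbb{E}[\beta_r e^{j\varphi_r}]=0$, it carries half of the ACA power for every $N_{\rm D}$.
\item Hence no CLT/LLN normalization makes its correlation with $\boldsymbol{x}_l$ vanish.
\end{itemize}
Conditioning on the PT-phase quantities does remove the $\boldsymbol{d}'_k$ contribution exactly. It leaves the cross term $-2\Re\{\mathbb{E}[e_{k,l}^{*}\,\boldsymbol{d}_k^H\boldsymbol{x}_l]\}$, with $e_{k,l}=(\boldsymbol{h}^{\rm c}_{{\rm PT},k})^H\boldsymbol{x}_l-s_{k,l}$, and this term is not negligible in general. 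For example, take $\kappa=0$ and $L\ge N_{\rm B}$, so that $\mathbf{X}_0\mathbf{X}_0^H=\frac{P_0L}{N_{\rm B}}\mathbf{I}_{N_{\rm B}}$. Averaging over $l$, the cross term equals $-2P_0\mathscr{L}^{\rm c}_{{\rm cas},k}N_{\rm D}\overline\nu+O(\sqrt{N_{\rm D}})$. That is the same order as, and opposite in sign to, the ACA power term you keep.

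Equivalently, $e_{k,l}+(\boldsymbol{h}^{\rm c}_{{\rm ACA},k})^H\boldsymbol{x}_l=([{\bf H}^{\rm c}_{\rm d}]_{:,k}+\boldsymbol{d}'_k)^H\boldsymbol{x}_l-s_{k,l}$. Its exact expansion is around $[{\bf H}^{\rm c}_{\rm d}]_{:,k}$, not around $\boldsymbol{h}^{\rm c}_{{\rm PT},k}$. So \eqref{xxxxx} does not follow from the channel model. To reproduce the paper's argument, you must adopt as a modeling assumption that $\boldsymbol{h}^{\rm c}_{{\rm ACA},k}$ is zero-mean and independent of $(\boldsymbol{x}_l,s_{k,l},\boldsymbol{h}^{\rm c}_{{\rm PT},k})$, rather than try to prove it.

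\textbf{Your Step~2 fallbacks do not give the stated constant.} There are several reasons:
\begin{itemize}
\item A covariance matrix with diagonal entries $\sigma^2$ is not dominated by $\sigma^2\mathbf{I}$ in general; its largest eigenvalue can reach $N_{\rm B}\sigma^2$.
\item The near-field LoS term $\widehat{\bf G}^{\rm LOS}(\widehat{\bf G}^{\rm LOS})^H$ is not a multiple of the identity.
\item $\boldsymbol{x}_l$ depends on ${\bf G}$ through ${\bf H}^{\rm c}_{\rm PT}$.
\item Cauchy--Schwarz costs a factor $N_{\rm B}$.
\end{itemize}
The paper reaches $\sum_n\mathbb{E}[|h^{\rm c}_{{\rm ACA},k,n}x_{l,n}|^2]$ only by taking the entries as uncorrelated (the ``i.i.d.'' claim in Proposition~\ref{Proposition1}) and independent of $\boldsymbol{x}_l$. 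Granting that, your Steps~2--3 match the paper. No CLT or continuous-mapping argument is needed there, because the second moment is exact at every $N_{\rm D}$. Convergence in distribution alone would not carry over to the expectations in the SINR anyway.
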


\begin{IEEEproof}
The expectation of ${\mathscr{I}}$ in~\eqref{RewInterf} can be written as
\begin{alignat}{1}%{alignat}{1}
    \nonumber
    &{\mathbb{E}\!\left[{\mathscr{I}}\!\right]\!}\! =\! \! {\mathbb{E}\!\!\left[\!\left|\! {{{ ( {\boldsymbol{h}_{{\rm{PT}},k}^{\rm{c}}} )}^{\!H}}\!{\boldsymbol{x}_l} \!-\!\! {s_{k,l}}} \right|^2\!\right]\! }\!\!+\!\! 
    {\mathbb{E}\!\!\left[\! \left(\!{{{ \!( {\boldsymbol{h}_{{\rm{PT}},k}^{\rm{c}}} )}^{\!H}}\!{\boldsymbol{x}_l} \!-\!\! {s_{k,l}}} \!\right)\!{\boldsymbol{x}^H_l}\!{\boldsymbol{h}_{\!{\rm{A\!C\!A}},k}^{\rm{c}}}\!\right] } \\
    &+  \! \!{\mathbb{E}\!\!\left[\!\left({{{\! (\! {\boldsymbol{h}_{{\rm{PT}},k}^{\rm{c}}} )}^H}{\!\boldsymbol{x}_l} \!-\! {s_{k,l}}} \!\!\right)^{\!H}\!{{ \!( {\boldsymbol{h}_{{\rm{ACA}},k}^{\rm{c}}} )}^{\!H}}\!{\boldsymbol{x}_l}\right]} \!\!+ 
    \!\mathbb{E}\!\!\left[\!\left|{{ ( {\boldsymbol{h}_{{\rm{ACA}},k}^{\rm{c}}} )}^{\!H}}{\boldsymbol{x}_l} \right|^2\!\right].
    \label{InterferTermReduce}
\end{alignat}

Based on Proposition~\ref{Proposition1}, we have the following conclusion:
\begin{alignat}{1}%{alignat}{1}
    \nonumber
    &{\mathbb{E}\!\!\left[\!\left({{{ \!( {\boldsymbol{h}_{{\rm{PT}},k}^{\rm{c}}} \!)}^H}\!{\boldsymbol{x}_l} \!-\! {s_{k,l}}} \right){\boldsymbol{x}^H_l}{\boldsymbol{h}_{\!{\rm{A\!C\!A}},k}^{\rm{c}}}\!\right] }  \le\\
    &\;\;\;\;\;\;\;\;\;\;\;\;\;\;\;\;\;\; {\mathbb{E}\!\!\left[\!\left({{{ \!( {\boldsymbol{h}_{{\rm{PT}},k}^{\rm{c}}} \!)}^H}\!{\boldsymbol{x}_l} \!-\! {s_{k,l}}} \right){\boldsymbol{x}^H_l} \!\right] } 
    {\mathbb{E}\!\left[\!{\boldsymbol{h}_{\!{\rm{A\!C\!A}},k}^{\rm{c}}}\!\right] } 
    \mathop  \to \limits^{\rm{d}} 0.
    \label{InterferTermHACA}
\end{alignat}
Similarly, 
\begin{alignat}{1}%{alignat}{1}
    \nonumber
    &{\mathbb{E}\!\!\left[\! \left({{{\! (\! {\boldsymbol{h}_{{\rm{PT}},k}^{\rm{c}}} )}^H}{\!\boldsymbol{x}_l} \!-\! {s_{k,l}}} \right)^H{{ \!( {\boldsymbol{h}_{{\rm{ACA}},k}^{\rm{c}}} )}^H}\!{\boldsymbol{x}_l}  \! \right] }  \le\\
    &\;\;\;\;\;\;\;\;\;\;\;  {\mathbb{E}\!\!\left[\!  \left({{{\! (\! {\boldsymbol{h}_{{\rm{PT}},k}^{\rm{c}}} )}^H}{\!\boldsymbol{x}_l} \!-\! {s_{k,l}}} \right)^H{{ \!( {\boldsymbol{h}_{{\rm{ACA}},k}^{\rm{c}}} )}^H} \!\right] } 
    \!{\mathbb{E}\!\left[{\boldsymbol{x}_l}\right] } 
    \mathop  \to \limits^{\rm{d}} 0.
    \label{Interfer2TermHACA}
\end{alignat}

According to Proposition~\ref{Proposition1}, we have 
\begin{alignat}{1}%{alignat}{1}
    & {\mathbb E}\!\!\left[\!\left| \!{{ \left( \!{\boldsymbol{h}_{{\rm{A\!C\!A}},k}^{\rm{c}}}\!\right)\!}^H}\!{\boldsymbol{x}_l}\!\right|^2\right] 
     = \!\sum\limits_{n = 1}^{{N_{\rm{B}}}} {\mathbb E}\!\!\left[\!\left| \!{{ \left( \!{ {h}_{{\rm{A\!C\!A}},k,n}^{\rm{c}}}\!\right)\!}^H}\!{{x}_{l,n}}\!\right|^2\right]   \label{InterferTermReduceFirst}\\
    &\;\;\;\;\;\;\;\;\;\;\; \le  {{{\mathscr{L}}^{\rm c}_{{\rm {cas}},k}} {N\!_{\rm D}}{\overline \mu}} {\mathbb E}\!\left[\!\left\|{\boldsymbol{x}_l}\right\|^2\!\right] \!\!= \! P_0 {{{\mathscr{L}}^{\rm c}_{{\rm {cas}},k}} {N\!_{\rm D}}{\overline \mu}} ,
    \label{InterferTermReduceFirst1}
\end{alignat}
where ${ {h}_{{\rm{A\!C\!A}},k,n}^{\rm{c}}}$ and ${{x}_{l,n}}$ stand for the $n$-th element of ${\boldsymbol{h}_{{\rm{A\!C\!A}},k}^{\rm{c}}}$ and ${\boldsymbol{x}_l}$, respectively.

Substituting~\eqref{InterferTermReduceFirst1} into~\eqref{xxxxx}, \eqref{SINRDRISMath} is derived.
\end{IEEEproof}

\subsection{Impact of DRIS on Sensing}~\label{DRISJammSnesing}
Having quantified the impact of the DRIS on communication performance, we now turn to its impact on the sensing performance of the bistatic ISAC system. Specifically, the $l$-th observation vector ${{ {\boldsymbol{y}}_{\rm s}}}$ in~\eqref{SensingChannel} can be rewritten as
\begin{equation}
    {{ {\boldsymbol{y}}_l^{\rm s}}}  = \chi\!\left({\boldsymbol{h}_{{\rm d},2}^{\rm s}}\right)^{\!H} {\boldsymbol{h}_{{\rm d},1}^{\rm s}}{{\boldsymbol{x}_l}}  
    \!+\!\chi\!\left({\boldsymbol{h}_{{\rm d},2}^{\rm s}}\right)^{\!H}\! {\boldsymbol{h}_{\rm D}^{\rm s}}\!(t_{l}){{\boldsymbol{x}_l}}
     \!+   \!{{\boldsymbol{n}^{\rm s}_l}}. \label{EchoSig1}
   % & = {{\widetilde{\boldsymbol{u}}_l^{\rm s}}} + {{\widetilde{\boldsymbol{n}}_l^{\rm s}}},\label{EchoSig1}
\end{equation}

The term ${{\widetilde{\boldsymbol{n}}_l^{\rm s}}} \!= \!\chi\!\left({\boldsymbol{h}_{{\rm d},2}^{\rm s}}\right)^{\!H}\! {\boldsymbol{h}_{\rm D}^{\rm s}}\!(t_{l}){{\boldsymbol{x}_l}}
     +  {{\boldsymbol{n}^{\rm s}_l}} $ 
     consists of the DRIS-induced ACA interference and AWGN.
Furthermore, we define the overall sensing vector as 
${{\overline{\boldsymbol{y}}_{\rm s}}} = {{\overline{\boldsymbol{u}}_{\rm s}}} + {{\overline{\boldsymbol{n}}_{\rm s}}}$,
where ${{\overline{\boldsymbol{y}}_{\rm s}}} ={\rm{vec}}\!\left( \left[{{\widetilde{\boldsymbol{y}}_1^{\rm s}}},\ldots,
{{\widetilde{\boldsymbol{y}}_L^{\rm s}}}\right] \right)$,
${{\overline{\boldsymbol{u}}_{\rm s}}} =
{\rm{vec}}\!\left(\left[{{\widetilde{\boldsymbol{u}}_1^{\rm s}}},\ldots,{{\widetilde{\boldsymbol{u}}_L^{\rm s}}}\right]\right)$,
and ${{\overline{\boldsymbol{n}}_{\rm s}}} = {\rm{vec}}\!\left(\left[{{\widetilde{\boldsymbol{n}}_1^{\rm s}}},\ldots,{{\widetilde{\boldsymbol{n}}_L^{\rm s}}}\right]\right)$.
% In this section, we aim to quantify the impact of DRIS on sensing performance.
% In ISAC systems, the trade-off factor $\kappa$ in the Pareto optimization problem (P2) is typically 
% set to a small value~\cite{LiuJSAC,FLiuISAC,FLiuISAC1}.
% Moreover, the impact of the ISAC waveform resulting from the simultaneously S\&C functions has been well studied in
% previous work~\cite{MUIImpact}.
% Therefore, in the following analysis, we impose the constraint on ${\widetilde {\bf{C}}_{\bf X}}$
% by ${\widetilde {\bf{C}}_{\bf X}}  =  \frac{P_0}{N_{\rm B}}{\bf I}_{N\!_{\rm B}}$ 
% to clearly highlight the impact of the DRIS on sensing performance.
% Note that the sensing performance under the strict sensing constraint serves as its lower bound.
We assume that the parameters to be estimated are limited to the AoD and AoA, 
i.e., ${\boldsymbol{\theta }} =  [\theta_1,\theta_2 ]^T$.
Before characterizing the impact of the DRIS on the sensing performance, 
we first present the statistical properties of the  cascaded DRIS-based sensing path ${\boldsymbol{h}_{\rm D}^{\rm s}}(t)$ 
in Proposition~\ref{Proposition2}.
\begin{proposition}
    \label{Proposition2}
    The i.i.d. elements of ${\boldsymbol{h}_{\rm D}^{\rm s}}\!(t)$ converge in distribution to $\mathcal{CN}\!\left( {0,  {{{\mathscr{L}}^{\rm s}_{{\rm {cas}}}} {N\!_{\rm D}} {\overline \nu} } } \right)$ as $N_{\rm D} \to \infty$, i.e.,
    \begin{equation}
        { { {h}_{{\rm D},n}^{\rm s}}\!(t) } \mathop  \to \limits^{\rm{d}} \mathcal{CN}\!\!\left( {0,  {{{\mathscr{L}}^{\rm s}_{{\rm {cas}}}} {N\!_{\rm D}}{\overline \nu} } } \right), 
        n = 1,\ldots, N_{\rm B},
        \label{hDsSta}
    \end{equation}
where  
$\overline \nu = {\sum\nolimits_{i = 1}^{{2^b}} {{p_i}{\mu^2_i}} }$.
\end{proposition}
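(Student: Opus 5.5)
The plan is to write each element of ${\boldsymbol{h}_{\rm D}^{\rm s}}(t)$ as a sum of $N_{\rm D}$ independent, zero-mean terms and apply a central limit theorem. This mirrors the argument behind Proposition~\ref{Proposition1}, with the difference $\boldsymbol{\varphi}(t_{\rm DT})-\boldsymbol{\varphi}(t_{\rm PT})$ replaced by the single vector $\boldsymbol{\varphi}(t)$.

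From~\eqref{hDs}, the $n$-th element is
\begin{equation*}
h_{{\rm D},n}^{\rm s}(t)=\sqrt{{\mathscr{L}}^{\rm s}_{\rm cas}}\sum_{r=1}^{N_{\rm D}} [\widehat{\bf G}]_{n,r}\,\beta_r(t)e^{j\varphi_r(t)}\,[\widehat{{\boldsymbol h}_{\rm I}^{\rm s}}]_r ,
\end{equation*}
where $|[\widehat{{\boldsymbol h}_{\rm I}^{\rm s}}]_r|=1$ by the UPA model~\eqref{hIs}. The summands $z_r=[\widehat{\bf G}]_{n,r}\beta_r e^{j\varphi_r}[\widehat{{\boldsymbol h}_{\rm I}^{\rm s}}]_r$ are independent across $r$. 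This holds because the DRIS controller draws $\varphi_r(t)\sim\mathcal{R}(\Psi)$ independently per element, and these draws are independent of $\widehat{\bf G}$.

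The first two moments follow from the DRIS design constraint $\mathbb{E}[\beta_r e^{j\varphi_r}]=0$ together with independence from $\widehat{\bf G}$, which give $\mathbb{E}[z_r]=0$. Next, $\mathbb{E}[|z_r|^2]=\mathbb{E}[|[\widehat{\bf G}]_{n,r}|^2]\,\mathbb{E}[\beta_r^2]$. Here $\mathbb{E}[\beta_r^2]=\sum_i p_i\mu_i^2=\overline\nu$. The Rician normalization in~\eqref{Ricianchan} gives $\mathbb{E}|[\widehat{\bf G}]_{n,r}|^2=\frac{\varepsilon}{1+\varepsilon}\cdot 1+\frac{1}{1+\varepsilon}\cdot 1=1$, since the LOS entries~\eqref{GLOS} have unit modulus. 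I would also check the pseudo-variance. It vanishes because $\mathbb{E}[z_r^2]$ contains the factor $\mathbb{E}[\beta_r^2e^{j2\varphi_r}]\,\mathbb{E}[[\widehat{\bf G}]_{n,r}^2]$. The NLOS part has zero pseudo-variance, and the LOS part can be handled by noting that the random phase symmetry, or alternatively the circular symmetry of the NLOS-plus-phase product, makes the limit circularly symmetric. This is the point I would state as an assumption if it does not follow directly. The variance of the sum is therefore ${\mathscr{L}}^{\rm s}_{\rm cas}N_{\rm D}\overline\nu$.

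The main obstacle is justifying the CLT rigorously, because the $z_r$ are independent but not identically distributed: the near-field LOS phases $e^{-j2\pi(D_n^r-D_n)/\lambda}$ and the UPA phases vary with $r$. I would apply the Lyapunov CLT to the normalized sum $\frac{1}{\sqrt{N_{\rm D}}}\sum_r z_r$ in its real and imaginary parts. Since $\beta_r\le 1$, the steering entries have unit modulus, and $[\widehat{\bf G}]_{n,r}$ has uniformly bounded third moments, $\sum_r\mathbb{E}|z_r|^3=O(N_{\rm D})$ while the variance grows as $N_{\rm D}$. The Lyapunov ratio is then $O(N_{\rm D}^{-1/2})\to 0$, yielding convergence in distribution to $\mathcal{CN}(0,{\mathscr{L}}^{\rm s}_{\rm cas}N_{\rm D}\overline\nu)$ in the paper's scaled sense.

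Finally, I would argue that the elements are uncorrelated across $n$. This follows from $\mathbb{E}[z_r^{(n)}(z_{r'}^{(n')})^*]=0$ for $r\neq r'$ by the zero-mean phases, while the $r=r'$ term is $\mathbb{E}[\beta_r^2]\,\mathbb{E}[[\widehat{\bf G}]_{n,r}[\widehat{\bf G}]^*_{n',r}]$. The NLOS part of this term vanishes, and the LOS part is treated as negligible, consistent with how the paper treats the elements as i.i.d. Joint Gaussianity of the limit then gives the asserted i.i.d. property.
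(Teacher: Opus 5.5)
Your proposal is correct at the paper's level of rigor and follows its Appendix~B argument. It writes $h^{\rm s}_{{\rm D},n}(t)$ as a sum over DRIS elements of independent zero-mean LOS and NLOS terms, obtains the variance ${\mathscr{L}}^{\rm s}_{\rm cas}N_{\rm D}\overline{\nu}$ from $\mathbb{E}[\beta_r^2]=\overline{\nu}$ and the unit power of $[\widehat{\bf G}]_{n,r}$, and applies a CLT. Your Lyapunov version is actually the appropriate one: the LOS summands carry $r$-dependent deterministic phases, so they are not identically distributed, which the Lindeberg--L\'evy CLT invoked in the paper would require.

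The two points you set aside as assumptions (circularity of the LOS part and uncorrelatedness across $n$) are passed over silently in the paper. They are genuinely extra hypotheses rather than consequences:
\begin{itemize}
\item With the paper's own set $\Psi=\{\frac{2\pi}{5},\frac{7\pi}{5}\}$ one gets $\mathbb{E}[\beta_r^2e^{j2\varphi_r}]=e^{j4\pi/5}\neq 0$. Circularity of the LOS part therefore depends on the geometry driving $\frac{1}{N_{\rm D}}\sum_r\big([\widehat{\bf G}^{\rm LOS}]_{n,r}[\widehat{{\boldsymbol h}_{\rm I}^{\rm s}}]_r\big)^2\to 0$.
\item The LOS cross-correlation $\frac{\varepsilon}{1+\varepsilon}{\mathscr{L}}^{\rm s}_{\rm cas}\overline{\nu}\sum_r[\widehat{\bf G}^{\rm LOS}]_{n,r}[\widehat{\bf G}^{\rm LOS}]^{*}_{n',r}$ is not negligible in general. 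In the far-field regime the LOS rows of $\widehat{\bf G}$ nearly coincide, and it approaches $\frac{\varepsilon}{1+\varepsilon}{\mathscr{L}}^{\rm s}_{\rm cas}N_{\rm D}\overline{\nu}$.
\end{itemize}
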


\begin{IEEEproof}
    See Appendix~\ref{AppendixB}.
\end{IEEEproof}

To quantify the impact of the DRIS on the sensing performance,
we derive the Fisher information matrix (FIM)~\cite{CRLBbook} in the presence of a DRIS 
as stated in Theorem~\ref{Theorem2}.

\begin{theorem}
    \label{Theorem2}
   For the bistatic ISAC system in the presence of a DRIS,  
    the FIM corresponding to  ${\boldsymbol{\theta }} =  [\theta_1,\theta_2 ]^T$ 
    obtained by 
    using the given ISAC waveform ${\bf X}$ is derived as
    \begin{equation}
        \left[{\bf F}\right]_{i,j} =\!  \sum\limits_{l = 1}^L{\!2 \Re\!\! \left\{ \!\frac{{\partial {{{ \left( {\widetilde{\boldsymbol{u}}_l^{\rm s}} \right)\!}^H}}}}{{\partial {\theta _i}}}
        \!{\widetilde {\bf {R}}}^{\!-\!1}_{l} \!\frac{{\partial {{\widetilde{\boldsymbol{u}}_l^{\rm s}}}}}{{\partial {\theta _j}}}\!\right\}}
        \!+\! {\sum\limits_{l = 1}^L \!{\rm{tr}}\!\!\left\{\!{\widetilde {\bf {R}}}^{\!-\!1}_{l}\frac{{\partial { {\widetilde {\bf {R}}}_{l}}}}{{\partial {\theta _i}}}
        {\widetilde {\bf {R}}}^{\!-\!1}_{l} \frac{{\partial {\widetilde {\bf {R}}}_{l}}}{{\partial {\theta _j}}}\!\right\}},
         \label{FIM}
    \end{equation}
where 
\begin{equation}
    {\widetilde {\bf {R}}}_l = \!\chi^2 \!{{\mathscr{L}}^{\rm s}_{\!{\rm {d}},2}}{{\mathscr{L}}^{\rm s}_{\!{\rm {cas}}}} {N\!_{\rm D}}{\overline \nu} 
    {\left\|{\boldsymbol{x}}_{l}\right\|^2}\!\!
    \left(  { \boldsymbol{\alpha}}^H_{{\!N\!_{\rm S}}} \!\!\left({\theta_2}\right) \!{ \boldsymbol{\alpha}}_{{\!N\!_{\rm S}}} \!\! \left({\theta_2}\right) \! \right)
       \!+\! {\delta^2_{\rm s}}{\bf I}\!_{N_{\rm S}}.
   \label{CovRDR1}
\end{equation}
\end{theorem}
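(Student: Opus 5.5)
The plan is to treat each observation ${\boldsymbol{y}}_l^{\rm s}$ in~\eqref{EchoSig1} as an independent complex Gaussian vector whose mean and covariance both depend on $\boldsymbol{\theta}$. With that model, the Slepian--Bamba formula for the FIM of a circularly symmetric complex Gaussian observation gives~\eqref{FIM} directly.

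\textbf{Mean.} Define the deterministic mean
$\widetilde{\boldsymbol{u}}_l^{\rm s}=\chi({\boldsymbol{h}_{{\rm d},2}^{\rm s}})^{H}{\boldsymbol{h}_{{\rm d},1}^{\rm s}}{\boldsymbol{x}_l}$.
By~\eqref{hds}--\eqref{LPA2} this equals $\chi\sqrt{{\mathscr{L}}^{\rm s}_{{\rm d},1}{\mathscr{L}}^{\rm s}_{{\rm d},2}}\,\boldsymbol{\alpha}^H_{N_{\rm S}}(\theta_2)\,\boldsymbol{\alpha}_{N_{\rm B}}(\theta_1)\boldsymbol{x}_l$. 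It therefore depends on $\theta_1$ through the BS steering vector and on $\theta_2$ through the receiver steering vector.

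\textbf{Disturbance.} The disturbance $\widetilde{\boldsymbol{n}}_l^{\rm s}$ has two parts. The first is the ACA term $\chi({\boldsymbol{h}_{{\rm d},2}^{\rm s}})^{H}{\boldsymbol{h}_{\rm D}^{\rm s}}(t_l){\boldsymbol{x}_l}$. By Proposition~\ref{Proposition2}, the entries of ${\boldsymbol{h}_{\rm D}^{\rm s}}(t_l)$ are asymptotically i.i.d. $\mathcal{CN}(0,{\mathscr{L}}^{\rm s}_{\rm cas}N_{\rm D}\overline{\nu})$. Hence, for fixed $\boldsymbol{x}_l$, the scalar ${\boldsymbol{h}_{\rm D}^{\rm s}}(t_l)\boldsymbol{x}_l$ is asymptotically $\mathcal{CN}(0,{\mathscr{L}}^{\rm s}_{\rm cas}N_{\rm D}\overline{\nu}\|\boldsymbol{x}_l\|^2)$. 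The ACA term is this scalar multiplied by the deterministic vector $\chi({\boldsymbol{h}_{{\rm d},2}^{\rm s}})^{H}$, so it is a rank-one Gaussian vector along $\boldsymbol{\alpha}^H_{N_{\rm S}}(\theta_2)$. Its covariance is $\chi^2{\mathscr{L}}^{\rm s}_{{\rm d},2}{\mathscr{L}}^{\rm s}_{\rm cas}N_{\rm D}\overline{\nu}\|\boldsymbol{x}_l\|^2\,\boldsymbol{\alpha}^H_{N_{\rm S}}(\theta_2)\boldsymbol{\alpha}_{N_{\rm S}}(\theta_2)$. The second part is the AWGN, which adds the noise term $\sigma_{\rm s}^2{\bf I}_{N_{\rm S}}$ (written $\delta_{\rm s}^2$ in~\eqref{CovRDR1}), and this yields $\widetilde{\bf R}_l$.

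\textbf{Independence and the FIM.} The DRIS coefficients are redrawn independently over time and the AWGN is white, so I take $\widetilde{\boldsymbol{n}}_l^{\rm s}$ to be independent across $l$. The stacked vector $\overline{\boldsymbol{y}}_{\rm s}$ is then Gaussian with block-diagonal covariance ${\rm blkdiag}(\widetilde{\bf R}_1,\ldots,\widetilde{\bf R}_L)$. The log-likelihood is
\[
\sum_l\Big(-\ln\det\widetilde{\bf R}_l-(\boldsymbol{y}_l^{\rm s}-\widetilde{\boldsymbol{u}}_l^{\rm s})^H\widetilde{\bf R}_l^{-1}(\boldsymbol{y}_l^{\rm s}-\widetilde{\boldsymbol{u}}_l^{\rm s})\Big).
\]
Differentiating twice and taking expectations, using $\mathbb{E}[\boldsymbol{y}_l^{\rm s}-\widetilde{\boldsymbol{u}}_l^{\rm s}]=0$ and $\partial\widetilde{\bf R}_l^{-1}=-\widetilde{\bf R}_l^{-1}(\partial\widetilde{\bf R}_l)\widetilde{\bf R}_l^{-1}$, gives the standard mean term $2\Re\{\partial_i\widetilde{\boldsymbol{u}}^H\widetilde{\bf R}^{-1}\partial_j\widetilde{\boldsymbol{u}}\}$ and the covariance term ${\rm tr}\{\widetilde{\bf R}^{-1}\partial_i\widetilde{\bf R}\widetilde{\bf R}^{-1}\partial_j\widetilde{\bf R}\}$. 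Block-diagonality makes both terms sum over $l$, which is exactly~\eqref{FIM}. A useful observation is that $\widetilde{\bf R}_l$ depends only on $\theta_2$. So the trace term vanishes unless $i=j=2$, and the AoD information comes only from the mean term; this explains the AoD/AoA asymmetry noted in the abstract.

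\textbf{Main obstacle.} The delicate step is justifying the Gaussian model for the ACA disturbance. Proposition~\ref{Proposition2} gives marginal convergence of each entry, but I also need joint Gaussianity of the vector ${\boldsymbol{h}_{\rm D}^{\rm s}}(t_l)$ so that its linear combination with $\boldsymbol{x}_l$ is Gaussian. I also need the entries to be uncorrelated so that the variance of the combination is exactly $\|\boldsymbol{x}_l\|^2$ times the per-entry variance.

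First I would check that the cross-covariances $\mathbb{E}[h_{{\rm D},n}h^*_{{\rm D},m}]$ are asymptotically negligible relative to the diagonal for $n\neq m$. This relies on the zero-mean DRIS constraint together with the random NLOS part of $\widehat{\bf G}$, or on the asymptotic orthogonality of near-field LOS rows. Joint convergence can then be obtained from a Lindeberg CLT applied to arbitrary linear combinations via Cram\'er--Wold.

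The CRLB is then interpreted as the FIM of this asymptotic (large-$N_{\rm D}$) Gaussian model, with $\boldsymbol{x}_l$ treated as known and deterministic.
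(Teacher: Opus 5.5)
Your proposal is correct and follows essentially the same route as the paper's proof. In both, Proposition~\ref{Proposition2} is used to treat the ACA term as a rank-one Gaussian disturbance along $\boldsymbol{\alpha}^H_{N_{\rm S}}(\theta_2)$, which gives the block-diagonal covariance with blocks $\widetilde{\bf R}_l$; the Gaussian log-likelihood then yields~\eqref{FIM} through the standard (Slepian--Bamba) Fisher formula, with $\partial\widetilde{\bf R}_l/\partial\theta_1=\mathbf{0}$.

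The joint Gaussianity and uncorrelatedness you flag as the main obstacle is simply assumed in the paper, via the vector form of Proposition~\ref{Proposition2} (its~\eqref{hDsGua}). If you do try to verify it, note that for $\varepsilon>0$ the zero-mean DRIS constraint and the random NLOS part do not remove the LOS contribution $\frac{\varepsilon}{1+\varepsilon}{\mathscr{L}}^{\rm s}_{\rm cas}\overline{\nu}\sum_r[\widehat{\bf G}^{\rm LOS}]_{n,r}[\widehat{\bf G}^{\rm LOS}]^*_{m,r}$ to the off-diagonal covariance. Asymptotic orthogonality of the LOS rows is therefore needed in addition to those conditions, not as an alternative to them.
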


\begin{IEEEproof}
According to  Proposition~\ref{Proposition2}, 
the covariance matrix of ${{\overline{\boldsymbol{n}}_{\rm s}}}$
can be calculated from
\begin{alignat}{1}%{alignat}{1}
    {\overline {\bf {R}}} _{\! \rm D} 
    &=  {\mathbb{E} }\! \left[ \left({{\overline{\boldsymbol{n}}_{\rm s}}} -
    {\mathbb{E}}\!\left[{{\overline{\boldsymbol{n}}_{\rm s}}}\right] \right)\!{\left({{\overline{\boldsymbol{n}}_{\rm s}}} -
    {\mathbb{E}}\!\left[{{\overline{\boldsymbol{n}}_{\rm s}}}\right] \right) }^{H}  \right] \label{CovRD1R1}\\
    & =  {\rm{blkdiag}}\!\left(\!{\widetilde {\bf {R}}}_1 ,\ldots,
    {\widetilde {\bf {R}}}_L \!\right),
    \label{CovRD1R12}
\end{alignat}
where
\begin{equation} 
    {\widetilde {\bf {R}}}_l  =
     {\mathbb{E} }\! \left[ \left({{\widetilde{\boldsymbol{n}}_l^{\rm s}}} -
    {\mathbb{E}}\!\left[{{\widetilde{\boldsymbol{n}}_l^{\rm s}}}\right] \right)\!
    {\left({{\widetilde{\boldsymbol{n}}_l^{\rm s}}} -
    {\mathbb{E}}\!\left[{{\widetilde{\boldsymbol{n}}_l^{\rm s}}}\right] \right) }^{H}  \right].
    \label{CovRD1R12V1} 
\end{equation} 
% In~\eqref{CovRD1R12V1}, we exploit a small trick to simplify the expression, i.e., ${\rm{vec}}\!\left(\!{\bf A}{\bf B}{\bf C}\right) = \left({\bf C}^T\!\otimes\!{\bf A}\right) 
% \!{\rm{vec}}\!\left({\bf B} \right)$. 
Consequently, 
we have
\begin{equation} 
    {\widetilde {\bf {R}}}_l =
     {\mathbb{E} }\!\! \left[ {{\widetilde{\boldsymbol{n}}_l^{\rm s}}} 
    {\left({{\widetilde{\boldsymbol{n}}_l^{\rm s}}}  \right) }^{\!H}  \!\right] \!\!+\!   {\mathbb{E} }\!\! \left[\! \chi^2\!\left({\boldsymbol{h}_{{\rm d},2}^{\rm s}}\right)^{\!H}\! {\boldsymbol{h}_{\rm D}^{\rm s}}\!(t_{l}){{\boldsymbol{x}_l}}
      {{\boldsymbol{x}^H_l}}\! \left({\boldsymbol{h}_{\rm D}^{\rm s}}\!(t_{l})\!\right)^{\!H} 
     \! {\boldsymbol{h}_{{\rm d},2}^{\rm s}} \!\right].
    \label{CovRD1R12V13} 
\end{equation} 
Consequently, \eqref{CovRDR1} is then obtained.

Moreover, based on Proposition~\ref{Proposition2},
${{\overline{\boldsymbol{y}}_{\rm s}}}$ follows a complex Gaussian distribution,
i.e., 
\begin{alignat}{1}%{alignat}{1}
    \nonumber
    {{\overline {\boldsymbol{y}}_{\rm s}}} & \sim \mathcal{CN}\!\left({{\overline {\boldsymbol{u}}_{\rm s}}}, {\overline {\bf {R}}} _{\! \rm D}  \right) \\
    & = \frac{1}{{{\pi ^{ {N_{\rm{S}}}L}}\det  \!\left( {\overline {\bf {R}}} _{\! \rm D}  \right)}}\exp \!\left(\! -
    \left({{\overline{\boldsymbol{y}}_{\rm s}}} \!-\! {{\overline{\boldsymbol{u}}_{\rm s}}}\right)^H{{\overline {\bf {R}}}^{-\!1}_{\! \rm D} }
    \!\left({{\overline{\boldsymbol{y}}_{\rm s}}} \!-\! {{\overline{\boldsymbol{u}}_{\rm s}}}\right)\! \right),
    \label{CNOb1}
\end{alignat}
where $\det ( {\overline{\bf {R}}} _{\! \rm D})$ is the determinant of ${\overline{\bf {R}}} _{\! \rm D}$.
Consequently, the log-likelihood function of ${{\overline {\boldsymbol{y}}_{\rm s}}}$ 
is expressed as
\begin{alignat}{1}
    & \mathcal{L}\!\left({\boldsymbol{\theta }}\right)   = 
     -\ln\det({\overline{\bf {R}}} _{\! \rm D}) -
     \left({{\overline{\boldsymbol{y}}_{\rm s}}} - {{\overline{\boldsymbol{u}}_{\rm s}}}\right)^H
     {\overline{\bf {R}}}^{ -\!1}_{\! \rm D}
    \left({{\overline{\boldsymbol{y}}_{\rm s}}} - {{\overline{\boldsymbol{u}}_{\rm s}}}\right)
    \label{Loglikelihoody}\\
    & \;\; =- \! {\sum\limits_{l = 1}^L {\ln \det\! \left( {{{\widetilde {\bf{R}}}_l}} \right)}}\! - \!\!
    {\sum\limits_{l = 1}^L \!\left({{\widetilde{\boldsymbol{y}}_l^{\rm s}}} \!-\! {{\widetilde{\boldsymbol{u}}_l^{\rm s}}}\right)^H{{{{\widetilde {\bf{R}}}_l}}}^{\!-\!1} 
    \!\left({{\widetilde{\boldsymbol{y}}_l^{\rm s}}} \!-\! {{\widetilde{\boldsymbol{u}}_l^{\rm s}}}\right)}, \label{LoglikelihoodyA1}
\end{alignat}
where constant terms are ignored for brevity.

Given the log-likelihood function $\mathcal{L}\!\left({\boldsymbol{\theta }}\right)$,
the FIM can be computed by 
\begin{equation}
    \left[{\bf F}\right]_{i,j} = {\mathbb{E} }\!\left[\!\frac{{\partial \mathcal{L}\!\left({\boldsymbol{\theta }}\right)}}{{\partial {\theta _i}}}
    \frac{{\partial \mathcal{L}\!\left({\boldsymbol{\theta }}\right)}}{{\partial {\theta _j}}}\!\right],
    \label{FIM1stTerm} 
\end{equation}
where $i,j =1,2$.

Substituting~\eqref{LoglikelihoodyA1} into~\eqref{FIM1stTerm}, the FIM in~\eqref{FIM}
is then obtained.
Based on the definitions of ${\boldsymbol{h}_{{\rm d},1}^{\rm s}}$ and ${\boldsymbol{h}_{{\rm d},2}^{\rm s}}$
in~\eqref{hds} and~\eqref{hd2s}, 
the partial derivative of ${ {{{\widetilde{\boldsymbol{u}}_l^{\rm s}}}}}$ 
with respect to ${\theta _1}$ in~\eqref{FIM} can be further expressed as 
\begin{equation} %{alignat}{1}
    \frac{{\partial {{{\widetilde{\boldsymbol{u}}_l^{\rm s}}}}}}{{\partial {\theta _1}}} 
    =\chi \sqrt{{{\mathscr{L}}^{\rm s}_{{\rm {d}},1}}{{\mathscr{L}}^{\rm s}_{{\rm {d}},2}}} {  { \boldsymbol{\alpha}}^H_{{\!N_{\!\rm S}}} \!\!\left({\theta_2}\!\right) }
    \frac{ {\partial { { \boldsymbol{\alpha}}_{{\!N_{\!\rm B}}} \!\!\left({\theta_1}\!\right)}}}{{\partial {\theta _1}}} {\boldsymbol{x}}_{l},
    \label{Partialu}
\end{equation}
where
\begin{equation} %{alignat}{1}
    \frac{ {\partial { { \boldsymbol{\alpha}}_{{\!N_{\!\rm B}}} \!\!\left({\theta_1}\!\right)}}}{{\partial {\theta _1}}}
    = j2\pi \Delta \cos{\theta_1}  { { \boldsymbol{\alpha}}_{{\!N_{\!\rm B}}} \!\!\left({\theta_1}\!\right)}{\bf \Lambda }_{N_{\rm B}},
    \label{PartialULA1}
\end{equation}
and ${\bf \Lambda }_{N_{\rm B}}\! = {\rm{diag}}\!\left(0,1,\ldots,(N_{\rm B}-1) \right)$.
Furthermore, the partial derivative of ${ { {{{\widetilde{\boldsymbol{u}}_l^{\rm s}}}}}}$ 
with respect to ${\theta _2}$ in~\eqref{FIM}
is calculated as
\begin{equation} %{alignat}{1}
    \frac{{\partial { {{{\widetilde{\boldsymbol{u}}_l^{\rm s}}}}}}}{{\partial {\theta _2}}}  
    =\chi \sqrt{{{\mathscr{L}}^{\rm s}_{{\rm {d}},1}}{{\mathscr{L}}^{\rm s}_{{\rm {d}},2}}}
     \frac{ {\partial { { \boldsymbol{\alpha}}^H_{{\!N_{\!\rm S}}} \!\!\left({\theta_2}\!\right)}}}{{\partial {\theta _2}}}
     \!{ \boldsymbol{\alpha}}_{{\!N_{\!\rm B}}} \!\!\left({\theta_1}\!\right) {\boldsymbol{x}}_{l},
    \label{Partialu2}
\end{equation}
where
\begin{equation} %{alignat}{1}
    \frac{ {\partial { { \boldsymbol{\alpha}}^H_{{\!N_{\!\rm S}}} \!\!\left({\theta_2}\!\right)}}}{{\partial {\theta _2}}}
    = -j2\pi \Delta \cos{\theta_2}{\bf \Lambda }_{N_{\rm S}} { { \boldsymbol{\alpha}}^H_{{\!N_{\!\rm S}}} \!\!\left({\theta_2}\!\right)}.
    \label{PartialULA2}
\end{equation}

According to the Sherman--Morrison formula, the inverse matrix of ${\widetilde {\bf {R}}}_{l}$ in~\eqref{FIM}
can be calculated as
\begin{equation} %{alignat}{1}
    { {\widetilde {\bf {R}}}^{-1}_{l}}=\frac{1}{{\delta^2_{\rm s}}} {\bf I}_{\!{{N}_{\rm S}}} 
    \!- 
    \frac{\chi^2 \!{{\mathscr{L}}^{\rm s}_{\!{\rm {d}},2}}{{\mathscr{L}}^{\rm s}_{\!{\rm {cas}}}} {N\!_{\rm D}}{\overline \nu} {\left\|{\boldsymbol{x}}_{l}\right\|^2}
    { \boldsymbol{\alpha}}^H_{{\!N_{\rm S}}} \!\!\left({\theta_2}\right) { \boldsymbol{\alpha}}_{{\!N_{\rm S}}} \!\!\left({\theta_2}\right)}
    {{\delta^4_{\rm s}} + \chi^2 \!{{\mathscr{L}}^{\rm s}_{\!{\rm {d}},2}}{{\mathscr{L}}^{\rm s}_{\!{\rm {cas}}}} {N\!_{\rm D}}{\overline \nu}  {\left\|{\boldsymbol{x}}_{l}\right\|^2}{{{N}_{\rm S}}}{\delta^2_{\rm s}}}.
    \label{Rlinv}
\end{equation}
 
Finally, the partial derivative of ${\widetilde {\bf {R}}}_{l}$ with respect to ${\theta _2}$ 
in~\eqref{FIM} can be calculated from~\eqref{CovRDR1} as
\begin{equation} %{alignat}{1}
     \frac{ {\partial { {\widetilde {\bf {R}}} _{l}}}}{{\partial {\theta _2}}} =   
     \chi^2 \!{{\mathscr{L}}^{\rm s}_{\!{\rm {d}},2}}{{\mathscr{L}}^{\rm s}_{\!{\rm {cas}}}} {N\!_{\rm D}}{\overline \nu} 
    {\left\|{\boldsymbol{x}}_{l}\right\|^2}\!
    \left( \!\frac{ {\partial {{ \boldsymbol{\alpha}}^H_{{\!N_{\rm S}}} \!\!\left({\theta_2}\right) { \boldsymbol{\alpha}}_{{\!N_{\rm S}}} \!\!\left({\theta_2}\right)}}}{{\partial {\theta _2}}}
    \right).
    \label{PartialRD2}
\end{equation}
In~\eqref{PartialRD2}, the partial derivative  can be further expressed as
\begin{equation} %{alignat}{1}
    \frac{ {\partial {{ \boldsymbol{\alpha}}^H_{{\!N_{\rm S}}} \!\!\left({\theta_2}\right) { \boldsymbol{\alpha}}_{{\!N_{\rm S}}} \!\!\left({\theta_2}\right)}}}{{\partial {\theta _2}}}
    = j2\pi \Delta \cos{\theta_2}{\bf D }_{N_{\rm S}} \!\odot\! {{ \boldsymbol{\alpha}}^H_{{\!N_{\rm S}}} \!\!\left({\theta_2}\right) { \boldsymbol{\alpha}}_{{\!N_{\rm S}}} \!\!\left({\theta_2}\right)},
    \label{PartialULAULA2}
\end{equation}
where 
$\odot$ represents the Hadamard product, 
and ${\bf D }_{N_{\rm S}} \in {\mathbb{R}}^{N_{\rm S} \times N_{\rm S}}$ is the matrix whose element in the $m$-th row and
the $n$-th column is $(n-m)$.
It is worth noting that the covariance matrix ${\widetilde {\bf {R}}}_{l}, \forall l$
is independent of the parameter $\theta_1$, and thus its partial derivative
with respect to $\theta_1$  is equal to zero.
\end{IEEEproof}

Unlike existing DRIS works in wireless communications, 
the DRIS exerts a distinctive impact on the sensing performance of bistatic ISAC systems.
 According to Theorem~\ref{Theorem2},  
the DRIS 
degrades the estimation accuracy of the AoD while, interestingly, improving that of the AoA.
In this work, we further use the CRLB to strictly quantify the impact of the DRIS on 
the sensing performance in the bistatic ISAC system.
Based on Theorem~\ref{Theorem2}, the CRLB of the AoD $\theta_1$ and AoA $\theta_2$ estimates
can be obtained from the inverse of $\bf F$~\cite{CRLBbook}, i.e., 
\begin{alignat}{1}
     { {C\!R\!L\!B}}\!\left(\theta_1\right) &= \left[{\bf F}^{-1}\right]_{1,1} 
     = \frac{\left[{\bf F} \right]_{2,2}}{\det\!\left({\bf F}\right)}, \label{CRLB1}\\
     { {C\!R\!L\!B}}\!\left(\theta_2\right) &= \left[{\bf F}^{-1}\right]_{2,2} = 
     \frac{\left[{\bf F} \right]_{1,1}}{\det\!\left({\bf F}\right)}, \label{CRLB2}
\end{alignat}
where ${\det\!\left({\bf F}\right)}$ represents the determinant of $\bf F$.

Note that, in the absence of DRIS, the FIM in Theorem~\ref{Theorem2} reduces to
\begin{equation}
    \left[{\bf F}\right]_{i,j} =  \frac{1}{{\delta^2_{\rm s}} }\sum\limits_{l = 1}^L{\!2 
    \Re\!\! \left\{ \!\frac{{\partial {{{ \left( {\widetilde{\boldsymbol{u}}_l^{\rm s}} \right)\!}^H}}}}{{\partial {\theta _i}}}
          \!\frac{{\partial {{\widetilde{\boldsymbol{u}}_l^{\rm s}}}}}{{\partial {\theta _j}}}\!\right\}}.
     \label{FIMWO}
\end{equation}

% In the traditional target detection, 
% the two-dimensional (2D)-MUSIC algorithm~\cite{BistaticRadar,DectP} 
% is widely employed to estimate $\theta_1$ and $\theta_2$.  
% However,  
% In the classical two-dimensional (2D)-MUSIC algorithm~\cite{BistaticRadar,DectP}, the estimation accuracy 
% is fundamentally limited by the resolution of the two-dimensional grid search.
% However, using a finer grid resolution leads to a significantly higher computational complexity.
% To address this issue, 
In practice, maximum likelihood estimation (MLE) is often used to accurately estimate the AoD and AoA~\cite{CRLBbook}.
Under standard regularity conditions, 
the MLE is asymptotically efficient and attains
the CRLBs as the number of snapshots or the signal-to-noise ratio (SNR) increases.
% Mathematically, the MLEs of the AoD and the AoA can be described as follows.
Based on Proposition~\ref{Proposition2},  ${{\overline{\boldsymbol{y}}_{\rm s}}}$ 
follows a complex Gaussian distribution, as given in~\eqref{CNOb1}.
Therefore, from the log-likelihood function ${\mathcal{L} \!\left({\boldsymbol{\theta }}\right) }$ in~\eqref{Loglikelihoody}, 
we then derive its gradient $\nabla {\mathcal{L} \!\left({\boldsymbol{\theta }}\right) } = 
\left[\frac{{\partial {\mathcal{L} \!\left({\boldsymbol{\theta }}\right) }}}{{\partial {\theta _1}}},
\frac{{\partial {\mathcal{L} \!\left({\boldsymbol{\theta }}\right) }}}{{\partial {\theta _2}}}\right]^T$ 
by differentiating ${\mathcal{L} \!\left({\boldsymbol{\theta }}\right) }$.
More specifically, the partial derivative of ${\mathcal{L} \!\left({\boldsymbol{\theta }}\right) }$ 
with respect to $\theta_1$ is given by
\begin{equation} %{alignat}{1}
    \frac{{\partial {\mathcal{L} \!\left({\boldsymbol{\theta }}\right) }}}{{\partial {\theta _1}}} = 
    \sum\limits_{l = 1}^L{2\Re\!\left\{ \frac{{\partial {{{ \left( {\widetilde{\boldsymbol{u}}_l^{\rm s}} \right)\!}^H}}}}{{\partial {\theta _1}}} 
    {\widetilde {\bf {R}}}^{\!-\!1}_{l}
    \left({{\widetilde{\boldsymbol{y}}_l^{\rm s}}} - {{\widetilde{\boldsymbol{u}}_l^{\rm s}}}\right)\right\}}.
    \label{PDL1}
\end{equation}

Based on~\eqref{CovRDR1}, ${\widetilde {\bf {R}}}_{l}$
is a function of $\theta_2$. Therefore, the partial derivative of ${\mathcal{L} \!\left({\boldsymbol{\theta }}\right) }$ 
with respect to $\theta_2$ is computed as
\begin{alignat}{1}
    \nonumber
    &\frac{{\partial {\mathcal{L} \!\left({\boldsymbol{\theta }}\right) }}}{{\partial {\theta _2}}} = 
    \sum\limits_{l = 1}^L{2\Re\!\left\{ \frac{{\partial {{{ \left( {\widetilde{\boldsymbol{u}}_l^{\rm s}} \right)\!}^H}}}}{{\partial {\theta _2}}} 
    {\widetilde {\bf {R}}}^{\!-\!1}_{l}
    \left({{\widetilde{\boldsymbol{y}}_l^{\rm s}}} - {{\widetilde{\boldsymbol{u}}_l^{\rm s}}}\right)\right\}}  + \\
    & \sum\limits_{l = 1}^L{\left({{\widetilde{\boldsymbol{y}}_l^{\rm s}}} - {{\widetilde{\boldsymbol{u}}_l^{\rm s}}}\right)^H 
   {\widetilde {\bf {R}}}^{\!-\!1}_{l} \frac{{\partial { {\widetilde {\bf {R}}} _{l}}}}{{\partial {\theta _2}}} 
   {\widetilde {\bf {R}}}^{\!-\!1}_{l} \left({{\widetilde{\boldsymbol{y}}_l^{\rm s}}} -  {{\widetilde{\boldsymbol{u}}_l^{\rm s}}}\right)} \!- \!\!\sum\limits_{l = 1}^L{{\rm{tr}}\!\!\left\{ {\widetilde {\bf {R}}}^{\!-\!1}_{l} \frac{{\partial { {\widetilde {\bf {R}}} _{l}}}}{{\partial {\theta _2}}}  \right\}},
     \label{PDL2}  
\end{alignat}
where the following matrix calculus identities are utilized:
\begin{equation} %{alignat}{1}
    \frac{{\partial { \ln\det({\widetilde {\bf {R}}} _{l}) }}}{{\partial {\theta _2}}} = 
    {\rm{tr}}\!\left\{ {\widetilde {\bf {R}}}^{\!-\!1}_{l}  
    \frac{{\partial { {\widetilde {\bf {R}}} _{l} }}}{{\partial {\theta _2}}} \right\},
    \label{PDL2A1}
\end{equation}
and
\begin{equation} %{alignat}{1}
    \frac{{\partial {{\widetilde {\bf {R}}}^{\!-\!1}_{l} }}}{{\partial {\theta _2}}} = 
    -{{\widetilde {\bf {R}}}^{\!-\!1}_{l} } \frac{{\partial {{\widetilde {\bf {R}}} _{l} }}}{{\partial {\theta _2}}}
    {{\widetilde {\bf {R}}}^{\!-\!1}_{l} }.
    \label{PDL2A2}
\end{equation}

The optimal AoA and AoD estimates are obtained by maximizing the log-likelihood function ${\mathcal{L} \!\left({\boldsymbol{\theta }}\right) }$.
Mathematically,
they are iteratively updated using the gradient ascent method 
\begin{equation} %{alignat}{1}
     {{\boldsymbol{\theta }}}^{(i+1)} = {{\boldsymbol{\theta }}}^{(i)} + 
     \zeta {\left.\nabla {\mathcal{L} \!\left({\boldsymbol{\theta }}\right) } \right|_{{{\boldsymbol{\theta }}}^{(i)}}},
    \label{MLEGAM}
\end{equation}
until
\begin{equation} %{alignat}{1}
     \left\|{{\boldsymbol{\theta }}}^{(i+1)} - {{\boldsymbol{\theta }}}^{(i)}\right\|^2 \le \sigma ,
    \label{MLEGAMThe}
\end{equation}
where $i \ge 1$ represents the $i$-th iteration,
$\zeta>0$ denotes the learning rate that controls the update step size, 
and $\sigma$ is a predefined convergence threshold.
% It is worth noting that the initial value in MLE can be obtained using methods such as the 2D-MUSIC algorithm in~\eqref{AOAAODMUSIC}, 
% which accelerates the convergence process and increases the likelihood of finding the correct solution.

\section{Simulation Results and Discussion}\label{ResDis}
In this section, we provide numerical results to evaluate the impact of 
the DRIS on bistatic ISAC systems and validate the accuracy of the derived theoretical analysis
in Section~\ref{ISACDesign}. 
We consider a bistatic ISAC system where the ISAC BS 
is equipped with an 8-element ULA located at (0m, 0m, 3m)
to communicate with 4 single-antenna users 
that are randomly distributed in the circular region $S$ centered at (0m, 180m, 0m) with a radius of 20m. 
Meanwhile, the DRIS with 4096 ($N_{{\rm D},h} = 64,N_{{\rm D},v} = 64 $) passive reflective elements is deployed at (-1m, 0m, 2.5m).
We assume that the DRIS  has a constant amplitude and one-bit quantized phase shifts  taken from $\Psi = \{\frac{2\pi}{5},\frac{7\pi}{5}\}$,
while the two phase shifts are chosen with equal probability. 
Consequently, $\overline \mu$ in Theorem~\ref{Theorem1} and $\overline \nu $ 
in Theorem~\ref{Theorem2} are calculated as 1.
In addition, the length of the data frame is $L = 80$,
and the trade-off factor in~$\left( {{\rm{P}}2} \right)$ is $\kappa = 0.2$~\cite{FLiuISAC}.
Furthermore, the sensing receiver in the bistatic ISAC system, located at (0m, 60m, 0m), 
is also equipped with an 8-element ULA 
to detect the sensing target, whose reflection cross-section $\chi$ is 0.9.
Specifically, the target is positioned 20m from the origin, 
with its bearing drawn uniformly between $\frac{\pi}{6}$ and $\frac{\pi}{3}$
to either side of the $x$-axis.

\begin{table}
    \footnotesize
    \centering
    \caption{Wireless Channel Simulation Parameters}
    \label{tab1}
    \begin{threeparttable}
    \begin{tabular}{ c|c }
    \hline
    Parameter        &Value\\
    \hline
    Large-scale LoS fading       & $ 35.6 + 22{\log _{10}}({d}) $ (dB) \\
    \hline
    Large-scale NLoS fading      &$32.6+36.7{\log _{10}}({d})$ \\
    \hline
    \end{tabular}
    \end{threeparttable}
\end{table} 

Based on the settings above, the wireless channel $\bf G$ is constructed using a
near field channel model, as expressed in~\eqref{Ricianchan} and~\eqref{GLOS}. 
Furthermore, the communication channels ${\bf H}_{\rm I}^{\rm c}$ and ${\bf H}_{\rm d}^{\rm c}$ 
are both based on far field channel models in~\eqref{HIkeq} and~\eqref{Hdkeq}.
In addition, the sensing channels ${\boldsymbol{h}}^{\rm s}_{{\rm{d}},1}$, ${\boldsymbol{h}}^{\rm s}_{{\rm{d}},2}$, 
and ${\boldsymbol{h}}^{\rm s}_{{\rm{I}}}$ are built based on~\eqref{hds}, \eqref{hd2s}, and \eqref{hIs}.
The large-scale  LoS  and NLoS channel fading coefficients  are defined in Table~\ref{tab1} based on 3GPP propagation models~\cite{3GPP}, 
and the variance of the communication noise is $\sigma^2_{\rm c}\!=\!-170\!+\!10\log_{10}\left(BW\right)$ dBm with a transmission bandwidth of 180 KHz.
For simplicity, we assume that the variance of sensing noise $\sigma^2_{\rm s}$ is equal to $\sigma^2_{\rm c}$.

\begin{figure}[!t]
    \centering
    \includegraphics[scale=0.63]{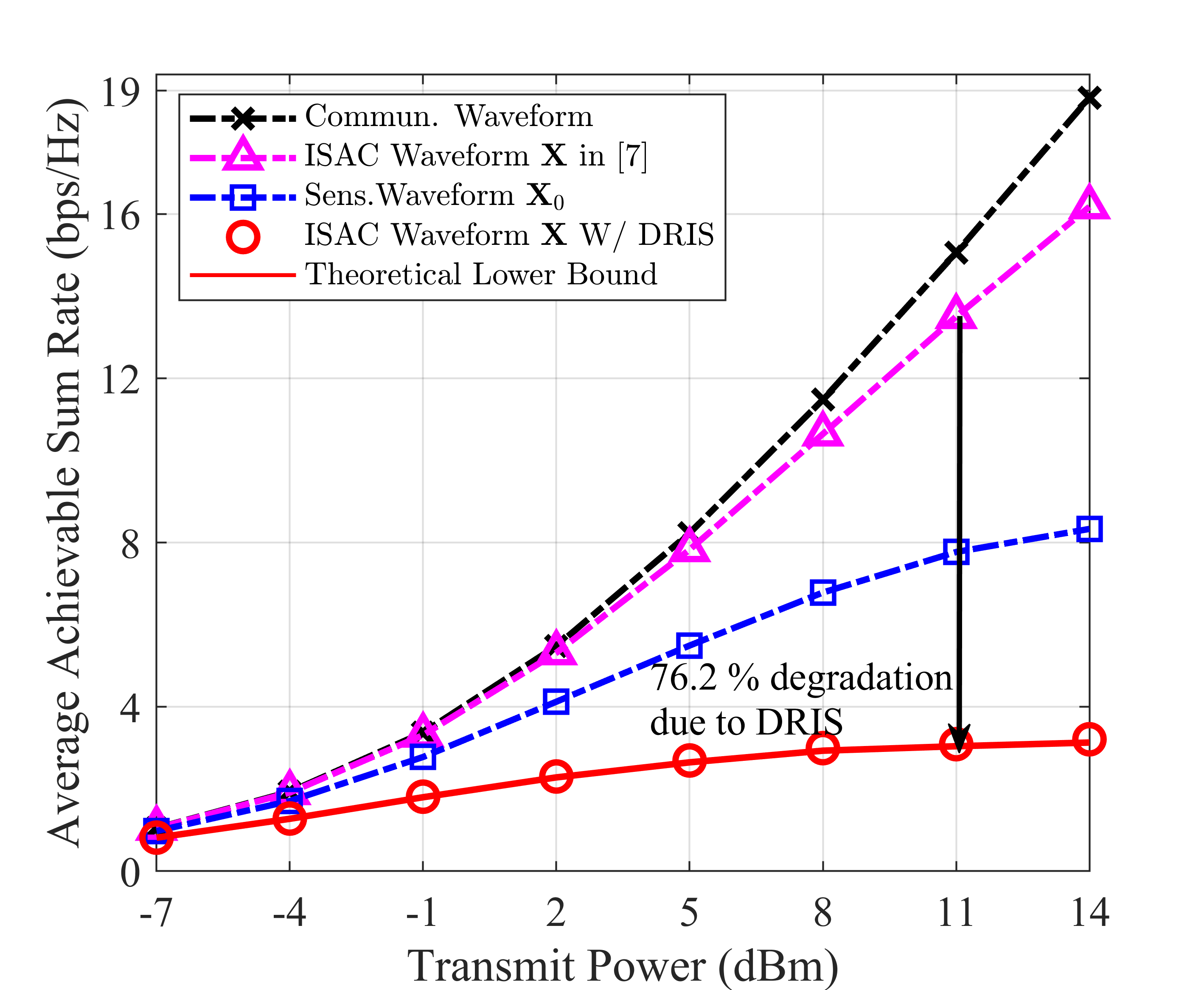}
    \caption{Average achievable sum rate vs. total transmit power for different benchmarks.}
    \label{ResfigCPpower}
\end{figure}
Fig.~\ref{ResfigCPpower} illustrates 
the achievable sum rates versus the transmit power at the ISAC BS evaluated under the following approaches: 
1) the best sum rates without MU interference and ACA interference (Commun. Waveform);
2) the sum rates achieved by a traditional ISAC waveform in~\cite{FLiuISAC}, i.e., $\bf{X}$
obtained from $\left( {{\rm{P}}2}-{{\rm{E}}1} \right)$ with $\kappa =0.2$ (ISAC Waveform $\bf X$ in~\cite{FLiuISAC});
3) the sum rates achieved by the optimal sensing waveform ${\bf X}_0$ 
with  the strict equality constraint in~\eqref{ISACWaveLim} (Sens. Waveform ${\bf X}_0$);
4) the sum rates achieved by the ISAC waveform $\bf X$ in the presence of a DRIS (ISAC 
Waveform $\bf X$ W/ DRIS); and
5) the corresponding theoretical analysis derived in Theorem~\ref{Theorem1} (Theoretical Lower Bound).

\begin{figure*}[!t]
    \centering
    \subfloat{
            \includegraphics[scale=0.62]{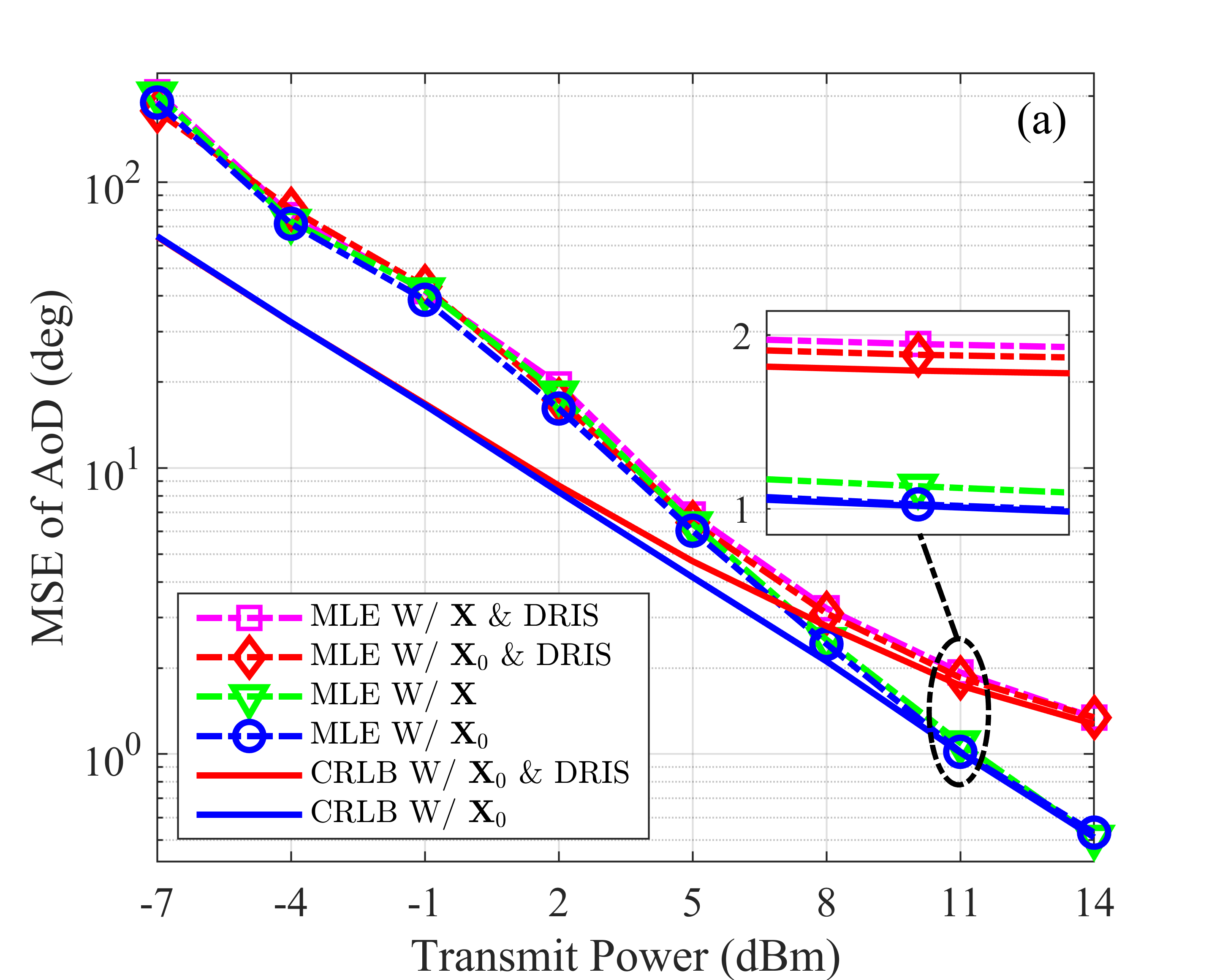}}\hspace{25pt}
    \subfloat{
            \includegraphics[scale=0.62]{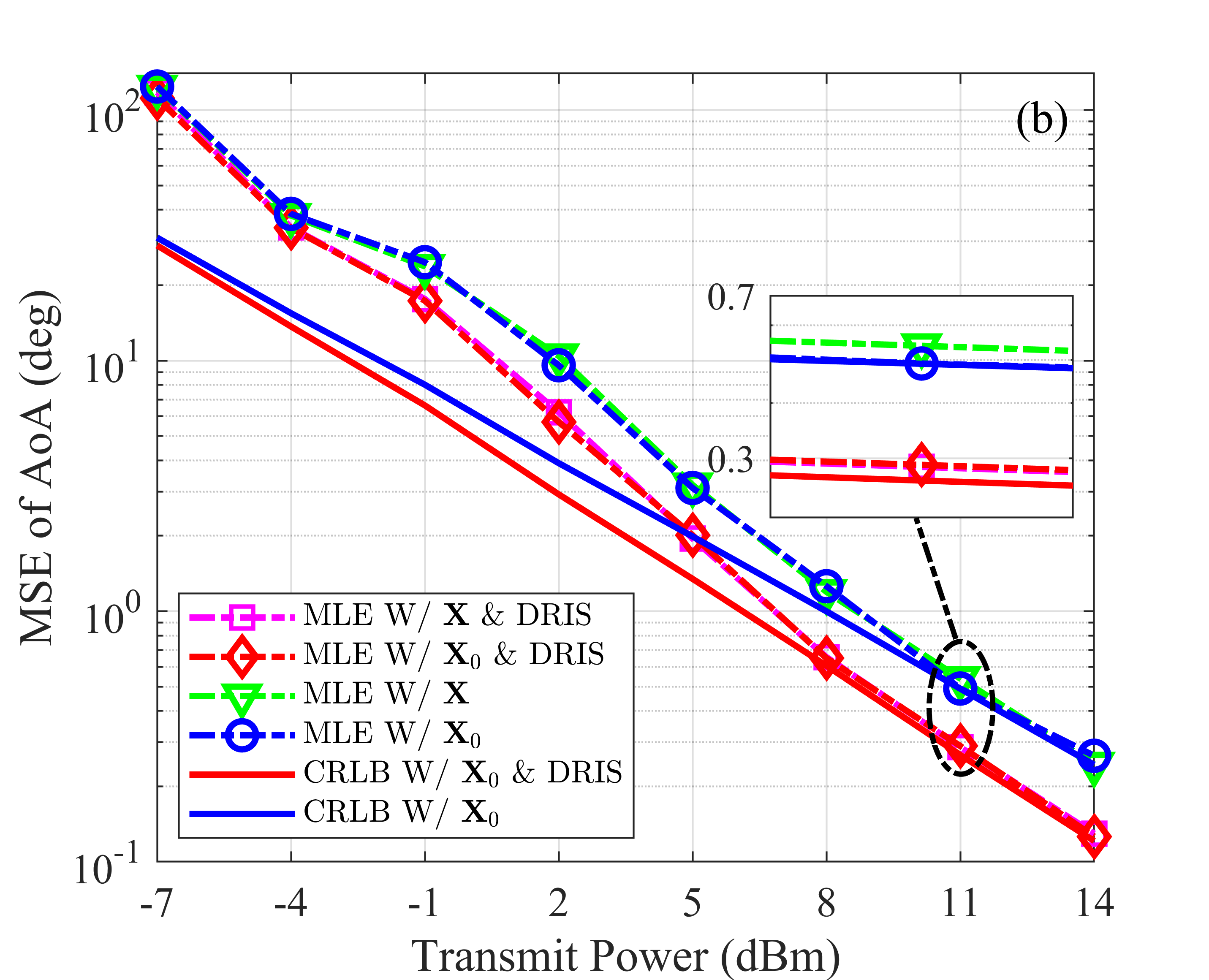}}
   \caption{Mean Square Error (MSE)  of (a)  AoD ($\theta_1$) and (b) AoA ($\theta_2$) estimates versus total transmit power
for different ISAC waveforms under MLE.}
    \label{ResFigSensingvsPowerMLE}
\end{figure*} 
It can be observed from Fig.~\ref{ResfigCPpower} that the ISAC waveform ${\bf X}$, 
obtained from $\left({\rm P2}-{\rm E1}\right)$, 
can effectively strike a trade-off between communication and sensing performance, 
as compared with the optimal sensing waveform ${\bf X}_0$.
Nevertheless, the results clearly show that 
the DRIS-induced ACA interference considerably reduces achievable sum rates.
Intuitively, increasing the total transmit power $P_0$ at the ISAC BS 
is expected to improve the achievable sum rates and mitigate the ACA interference imposed by the DRIS.
However, one can see that a higher transmit power does not effectively alleviate the adverse impact of the DRIS 
on the sum rates; 
instead, as indicated by Theorem~\ref{Theorem1}, 
it leads to an exacerbation of the DRIS-induced ACA interference.
More specifically, at 11 dBm total transmit power, 
the presence of the DRIS results in a degradation of 76.2\% 
in the achievable sum rates using the ISAC waveform $\bf X$.
Moreover, from Fig.~\ref{ResfigCPpower}, 
it can be seen that 
the simulation results are in close agreement with the theoretical derivations presented in Theorem~\ref{Theorem1}.

Fig.~\ref{ResFigSensingvsPowerMLE}
presents the sensing performance via the MLE described in Section~\ref{DRISJammSnesing}, 
together with the CRLB derived in Theorem~\ref{Theorem2}.
Specifically, we illustrate the mean square error (MSE) using the MLE method
and the CRLB 
achieved by the following waveforms:
1) the MSE achieved by the ISAC waveform $\bf X$ with DRIS (MLE W/ $\bf X$ \& DRIS);
2) the MSE achieved by the optimal sensing waveform ${\bf X}_0$ with DRIS (MLE W/ ${\bf X}_0$ \& DRIS);
3) the MSE achieved by the ISAC waveform $\bf X$ without DRIS (MLE W/ $\bf X$);
4) the MSE achieved by the optimal sensing waveform ${\bf X}_0$ without DRIS (MLE W/ ${\bf X}_0$);
5) the CRLB using ${\bf X}_0$ and in the presence of a DRIS (CRLB W/ ${\bf X}_0$ \& DRIS);
and
6) the CRLB  using ${\bf X}_0$ and without DRIS (CRLB W/ ${\bf X}_0$).
Fig.~\ref{ResFigSensingvsPowerMLE}~(a) presents the MSE and CRLB results for the AoD (i.e., \(\theta_1\)) estimation, 
while Fig.~\ref{ResFigSensingvsPowerMLE}~(b) shows the MSE and CRLB results for the AoA (i.e., \(\theta_2\)) estimation.
Moreover,
the MSE in Fig.~\ref{ResFigSensingvsPowerMLE} is defined as
\begin{equation} %{alignat}{1}
    {\rm{MSE}} =   {\frac{1}{N}\sum\limits_{n = 1}^N {{{\left( {{\theta _n} - 
    {{\overline \theta  }_n}} \right)}^2}} } ,
    \label{RMSE}
\end{equation}
where $N$ is the number of samples, 
${\theta _n}$ represents the $n$-th true value, 
and ${{\overline \theta}_n}$ denotes the $n$-th predicted value, 
respectively. 

From Fig.~\ref{ResFigSensingvsPowerMLE},
the MLE curves closely approach their CRLBs as the transmit power $P_0$ increases, 
validating the asymptotic efficiency of the MLE,
i.e., its MSE converges to the CRLB in the high SNR regime so that no other unbiased estimator can achieve a lower variance.
The performance gap between the ISAC waveform ${\bf X}$ and the optimal sensing waveform 
${\bf X}_0$ 
remains small in both cases, 
indicating that the ISAC waveform ${\bf X}$ preserves most of the sensing capability of 
 ${\bf X}_0$ 
while simultaneously supporting communications.

It is  observed that Theorem~\ref{Theorem2} is validated: 
in the bistatic ISAC system, 
 the DRIS does not always degrade the sensing performance.
In contrast, the DRIS only disrupts the  estimation accuracy of the AoD $\theta_1$, 
while simultaneously enhancing the estimation accuracy of the AoA $\theta_2$.
% This anisotropic impact induced by the DRIS on the estimation accuracy of different angular parameters can be exploited. 
% For example, 
Note that the anisotropic impact induced by the DRIS  can be used as a controllable resource to improve AoA estimation in scenarios where precise localization or tracking relative to the sensing receiver is essential. 
Conversely, the degradation in AoD accuracy can be intentionally used to mask transmitter-related angular information.  
%This offers an additional layer of protection and interference resilience for the sensing performance in bistatic ISAC systems. 
%Hence, the DRIS serves as a dual-purpose mechanism that enhances performance in certain angular domains while enabling intentional concealment in others.

It is worth noting that, in Fig.~\ref{ResFigSensingvsPowerMLE}, the AoA $\theta_2$ estimates are more accurate than the AoD $\theta_1$ estimates, even without a DRIS.
Under the considered geometry in this section, the sensing receiver observes the target only within a narrow angular sector around its broadside direction. 
In contrast, the ISAC BS observes the target at more oblique departure angles. 
Since the steering-vector derivative with respect to the AoA is therefore stronger on average, 
the corresponding Fisher information is larger. 
This leads to a lower CRLB and smaller empirical MSE than for the AoD.
 
\begin{figure}[!t]
    \centering
    \includegraphics[scale=0.63]{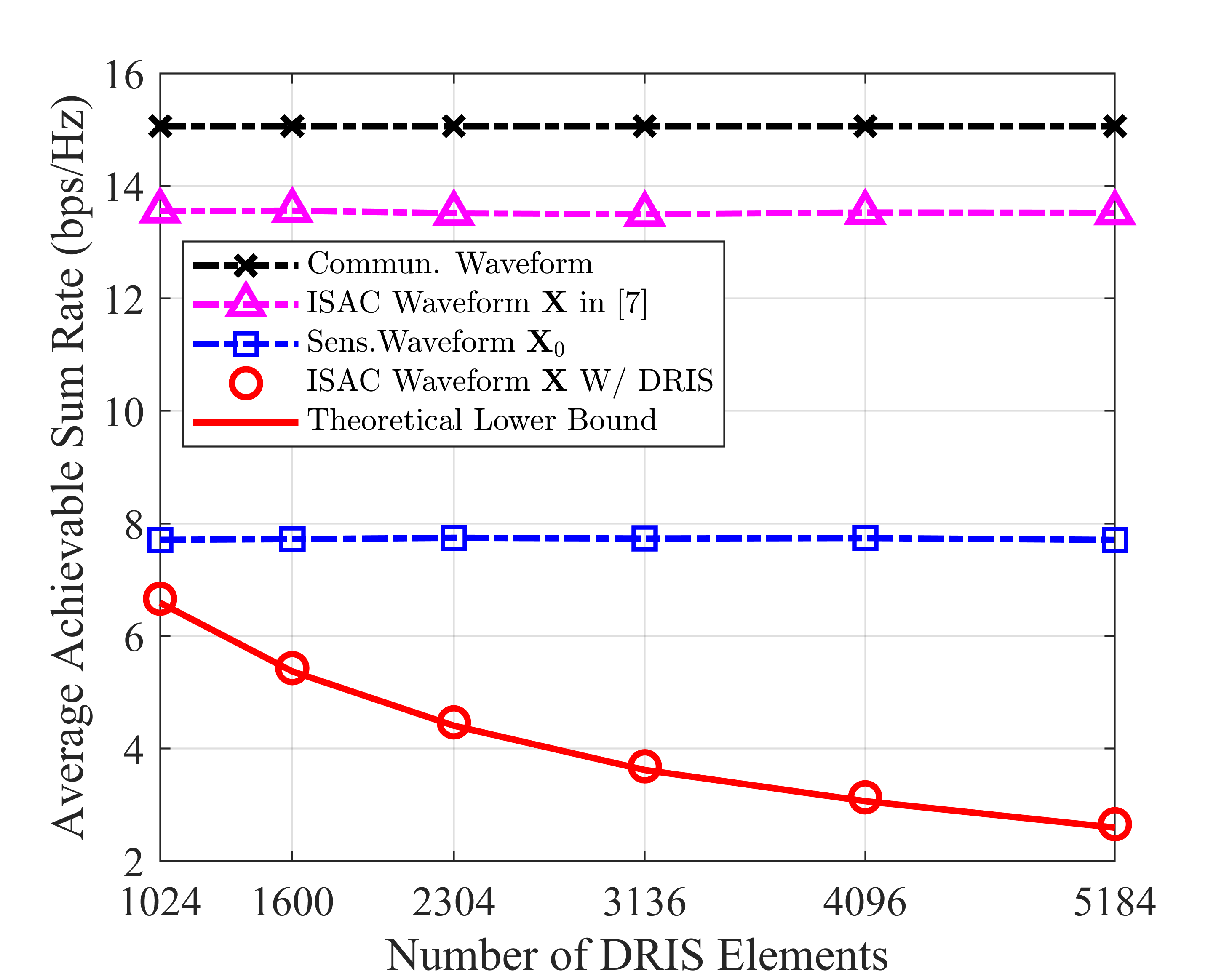}
    \caption{Average achievable sum rate vs. number of DRIS elements for different benchmarks.}
    \label{ResfigCND}
\end{figure}

\begin{figure*}[!t]
    \centering
    \subfloat{
            \includegraphics[scale=0.62]{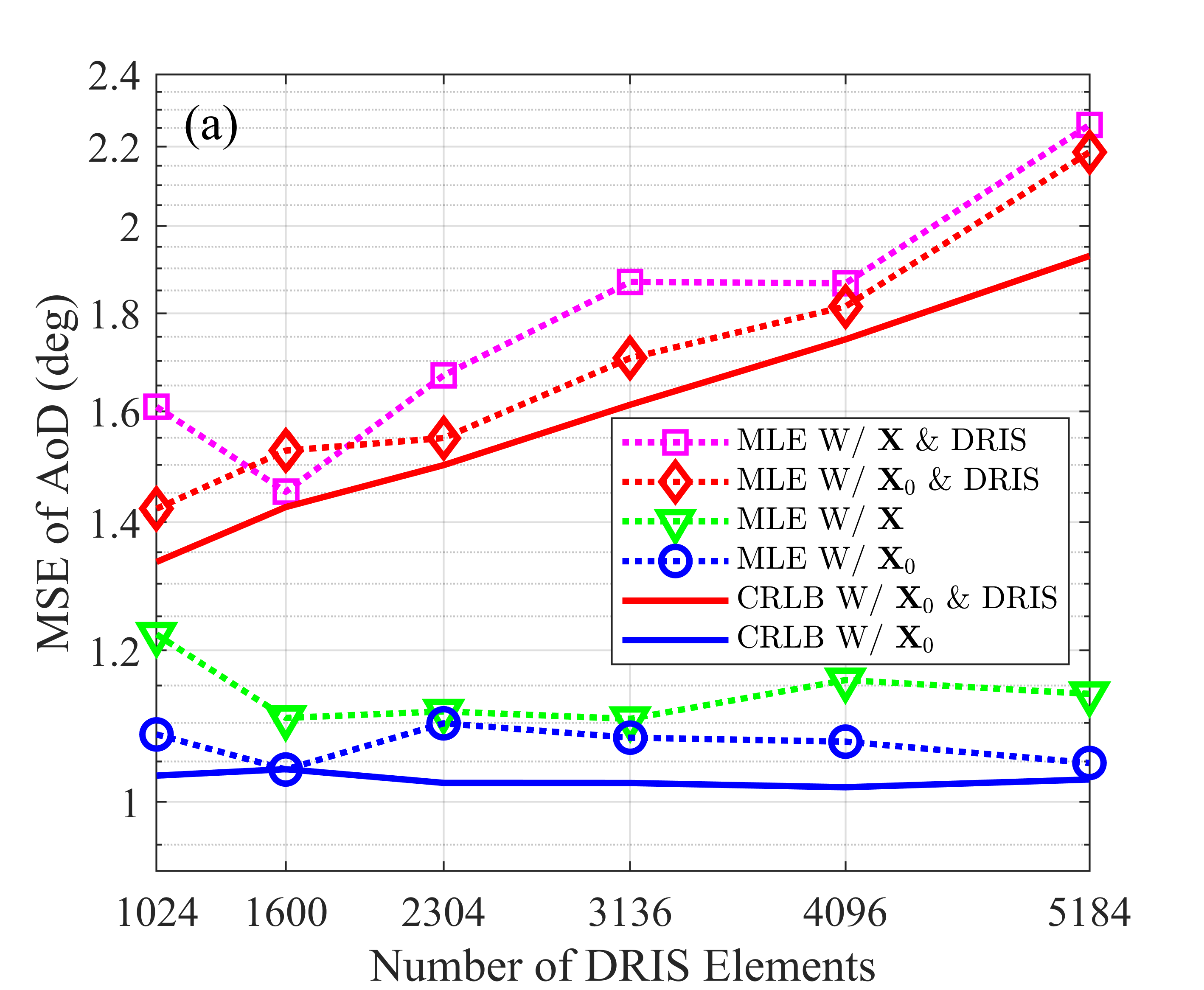}}\hspace{25pt}
    \subfloat{
            \includegraphics[scale=0.62]{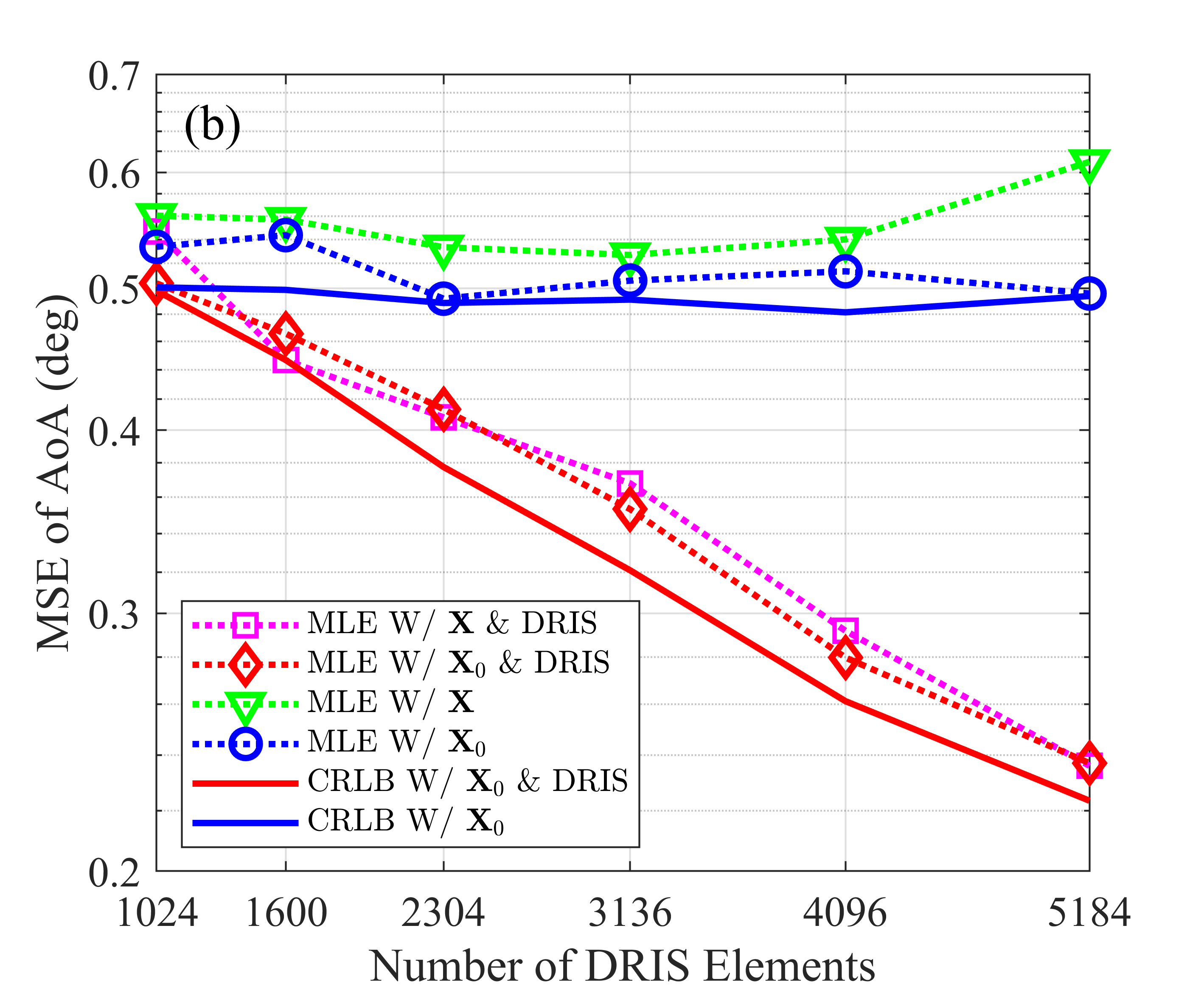}}
   \caption{Mean Square Error (MSE) of (a)  AoD ($\theta_1$) and (b) AoA ($\theta_2$) estimates versus 
   number of DRIS elements
for different ISAC waveforms under MLE.}
    \label{ResFigSensingvsNDMLE}
\end{figure*} 
Fig.~\ref{ResfigCND} illustrates the relationship between the average achievable sum rate 
and the number of DRIS elements when the total transmit power is set at 11 dBm.
As expected, the conventional communication-only system achieves the highest achievable sum rate 
since it is optimized solely for data transmission and is free from 
the sensing or DRIS constraint.
The ISAC waveform $\bf X$ attains a slightly lower but still high sum rate, 
reflecting the intrinsic tradeoff between sensing and communication performance. 
In contrast, the optimal sensing waveform ${\bf X}_0$
yields the lowest sum rate among these benchmarks
(Commun. Waveform and the ISAC Waveform $\bf X$ in~\cite{FLiuISAC}), 
because it strictly enforces the covariance constraint in~\eqref{ISACWaveLim}
to maximize sensing performance, which inevitably reduces the achievable communication rate.

When the DRIS is deployed, the behavior changes dramatically. 
According to Theorem~\ref{Theorem1}, 
the impact of the DRIS on communication performance increases with 
the number of DRIS elements $N_{\rm D}$.
Specifically, the lower bound of the received SINR contains an ACA term proportional to
 ${ P_0 {{{\mathscr{L}}^{\rm c}_{{\rm {cas}},k}} {N\!_{\rm D}}{\overline \mu} }}$ in the denominator, 
 so that increasing the number of DRIS elements linearly amplifies the ACA impact.
As a result, the average sum rate 
achieved by  ISAC Waveform $\bf X$ W\ DRIS monotonically decreases with $N_{\rm D}$.
The theoretical lower bound predicted by Theorem~\ref{Theorem1}
closely tracks the simulated curve over $N_{\rm D}$,
confirming the accuracy of the derived bound 
in the presence of a DRIS.
Therefore, Theorem~\ref{Theorem1} provides an analytically tractable characterization of this
DRIS impact.

 Fig.~\ref{ResFigSensingvsNDMLE} illustrates the
MSE results and the CRLBs of the (a) AoD $\theta_1$ and (b) AoA $\theta_2$ estimates 
as the number of DRIS elements $N_{\rm D}$ increases for different benchmarks.
For the AoD estimation results plotted in Fig.~\ref{ResFigSensingvsNDMLE}~(a),
the CRLB curve corresponding to 
CRLB W/ ${\bf X}_0$
 is essentially flat and close to $1.02^\circ$, 
 while the CRLB in the presence of a DRIS, i.e., 
 CRLB W/ ${\bf X}_0$ \& DRIS grows monotonically from approximately 
$1.33^\circ$ to over $1.93^\circ$ as the number of DRIS elements $N_{\rm D}$ increases.
One can see that the MSE achieved by the MLE closely tracks the two CRLBs, 
with only minor fluctuations due to Monte-Carlo averaging.

Based on the property derived from Theorem~\ref{Theorem2},
the introduction of the DRIS disrupts the estimation
accuracy of the AoD $\theta_1$ while enhancing the estimation accuracy
of the AoA $\theta_2$.
The behavior in Fig.~\ref{ResFigSensingvsNDMLE}~(b) for the AoA differs markedly, 
which confirms the anisotropic property predicted by Theorem~\ref{Theorem2}.
In the absence of a DRIS, the CRLB for $\theta_2$ is approximately $0.50^\circ$.
Meanwhile, the corresponding MSEs obtained with ${\bf X}_0$ and ${\bf X}$
lie slightly above the CRLB. 
In the presence of a DRIS, however, both the CRLB and the MSE results achieved by MLE decrease as $N_{\rm D}$ grows.
Specifically, the CRLB drops from around $0.50^\circ$ to around $0.22^\circ$.

Fig.~\ref{ResfigCDis} illustrates the average achievable sum rate versus the distance between the ISAC BS and the DRIS,
with the DRIS located at (-$d_{\rm D}$ m, 0 m, 2.5 m) and $d_{\rm D} = 0.5, 1, \ldots, 3$. 
For ISAC Waveform $\bf X$ W/ DRIS and the corresponding theoretical bound,
the degradation trend is consistent with the SINR lower bound 
derived in Theorem~\ref{Theorem1}.
The denominator of SINR in~\eqref{SINRDRISMath} 
contains ${ P_0 {{{\mathscr{L}}^{\rm c}_{{\rm {cas}},k}} {N\!_{\rm D}}{\overline \mu} }}$,
where the cascaded large-scale channel fading coefficient ${{\mathscr{L}}^{\rm c}_{{\rm {cas}},k}}$
depends on the distance between the ISAC BS and the DRIS.
\begin{figure}[!t]
    \centering
    \includegraphics[scale=0.63]{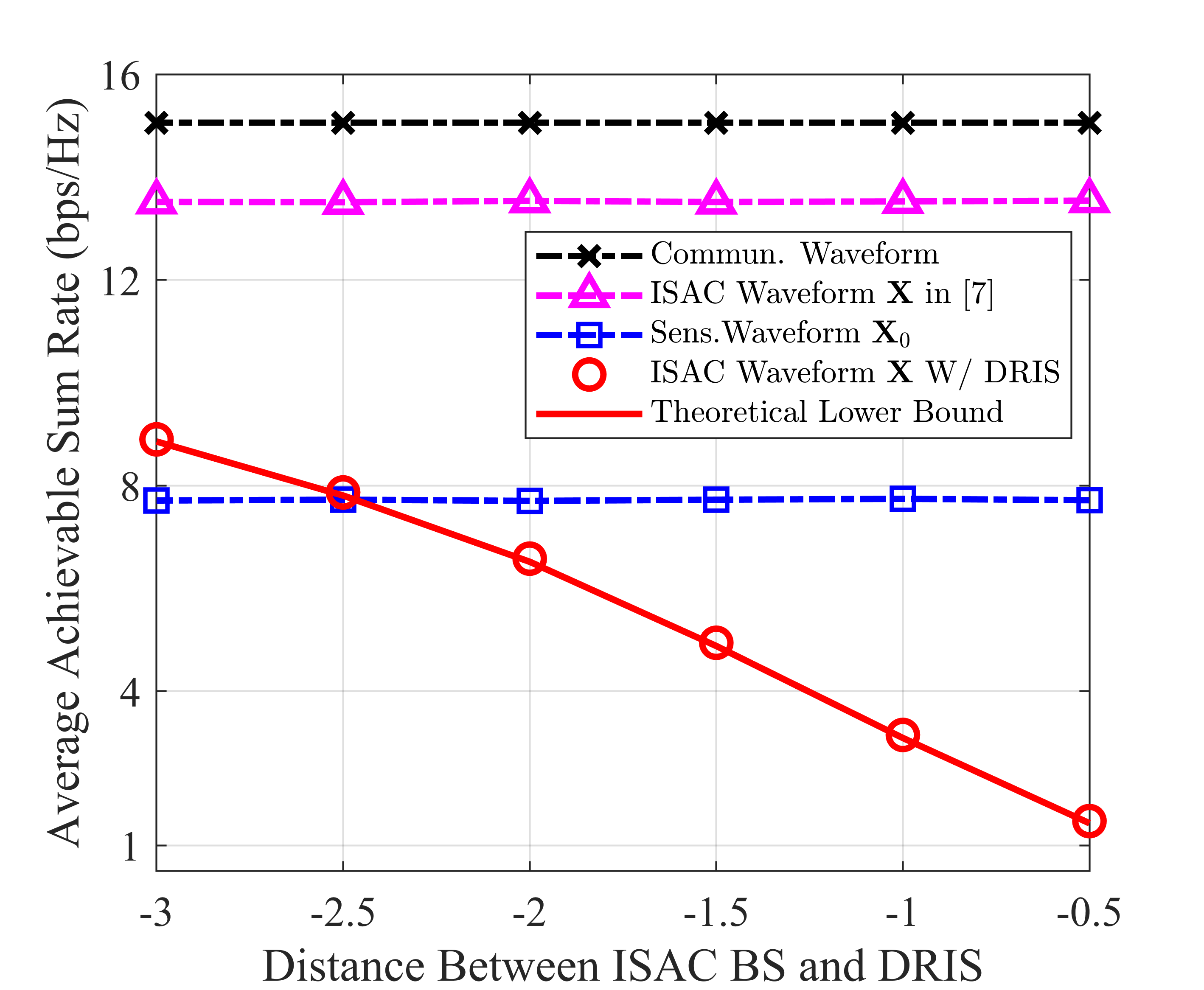}
    \caption{Average achievable sum rate vs. distance between ISAC BS and DRIS for different benchmarks.}
    \label{ResfigCDis}
\end{figure}

Increasing the distance between the ISAC BS and the DRIS results in a larger
${{\mathscr{L}}^{\rm c}_{{\rm {cas}},k}}$ and thus reduces the impact of 
DRIS-induced ACA interference on the achievable sum rate.
Deploying the DRIS close to the ISAC BS may increase the risk of being detected. 
Fortunately, an interesting property of RISs is their passive nature,
where RISs cannot transmit/receive, or process signals~\cite{IRSCuiTJ}. 
In addition, an RIS can be easily hidden in the environment by disguising it, 
for instance, by embedding it in a glass structure. 
In practical scenarios, an RIS can also be mounted on walls or integrated into existing infrastructure. 
Therefore, our DRIS can be set up ahead of time and hidden using the aforementioned camouflage techniques.

In Fig.~\ref{ResFigSensingvsDisMLE}, we respectively show the
MSE results and the CRLBs of the (a) AoD $\theta_1$ and (b) AoA $\theta_2$ estimates 
versus the distance between the ISAC BS and the DRIS for different benchmarks.
For the estimation results of AoD in Fig.~\ref{ResFigSensingvsDisMLE} (a),
the simulation results of CRLB W/O DRIS stay close to $1^\circ$.
Meanwhile, the corresponding MLE curves, i.e., MLE W/ ${\bf X}$
and MLE W/ ${\bf X}_0$ remain slightly above the CRLB.
When the DRIS is implemented, however, 
the simulation results of CRLB W/ DRIS 
increase significantly as the DRIS approaches the ISAC BS.
Specifically, the CRLB changes from $1.19^\circ$ to $3.05^\circ$
as the coordinate varies from $-3$ m to $-0.5$ m.  
The MLE results of AoD estimates, i.e., MLE W/ ${\bf X}$ \& DRIS and MLE W/ ${\bf X}_0$ \& DRIS
also remain close to the CRLB, thus verifying the  derived Theorem~\ref{Theorem2}.
According to~\eqref{CovRDR1}, ${\widetilde {\bf {R}}}_l$ 
does not depend on the AoD $\theta_1$, while 
the DRIS contributes additional ACA interference.
As a result, the DRIS reduces the FIM for $\theta_1$
and enlarges the corresponding CRLB as the distance between the ISAC BS and the DRIS decreases.
\begin{figure*}[!t]
    \centering
    \subfloat{
            \includegraphics[scale=0.62]{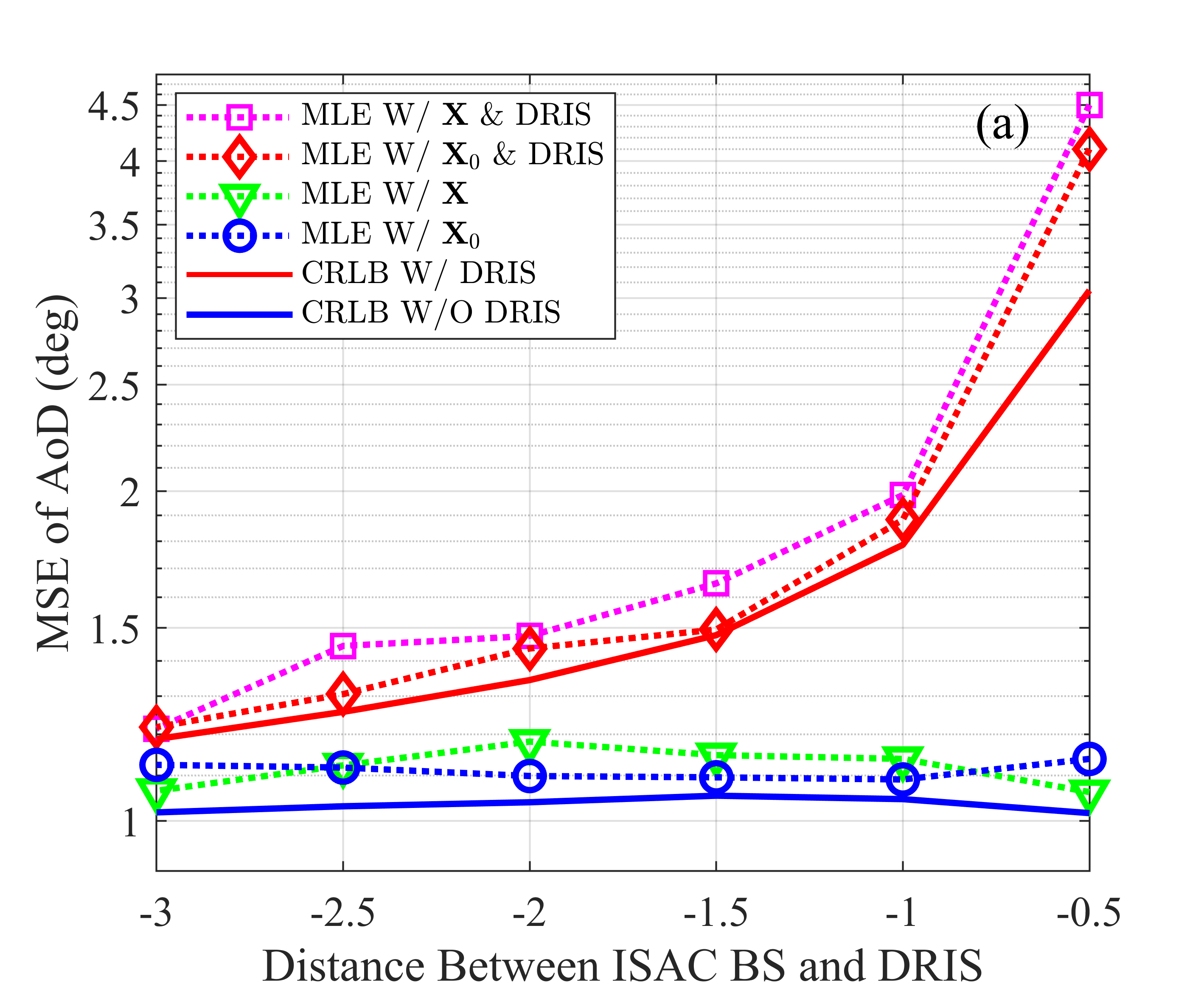}}\hspace{25pt}
    \subfloat{
            \includegraphics[scale=0.62]{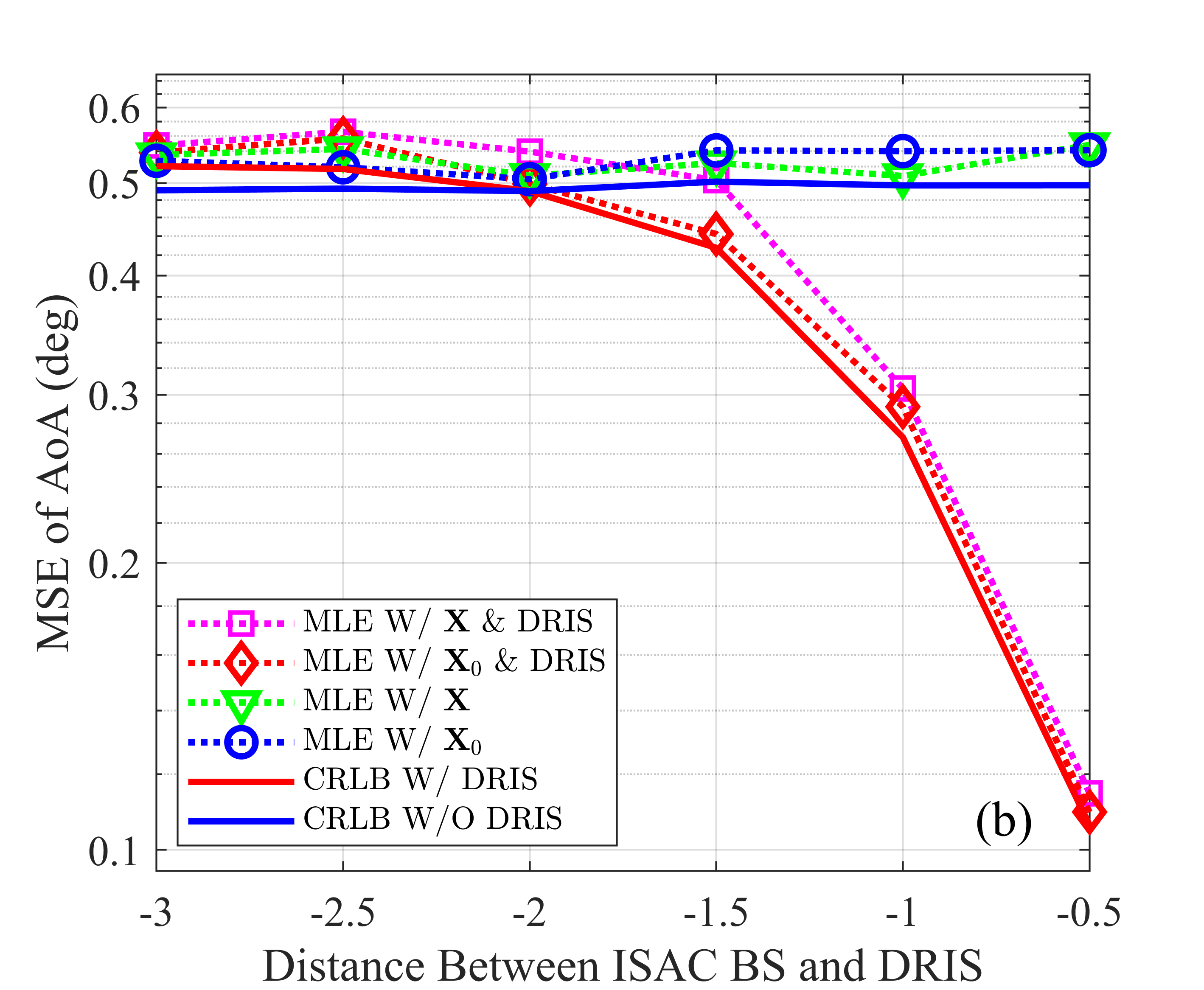}}
   \caption{Mean Square Error (MSE) of (a)  AoD ($\theta_1$) and (b) AoA ($\theta_2$) estimates versus 
   distance between ISAC BS and DRIS
for different ISAC waveforms under MLE.}
    \label{ResFigSensingvsDisMLE}
\end{figure*} 

Fig.~\ref{ResFigSensingvsDisMLE}~(b) plots the AoA estimation results. 
One can see that the MSE results and the CRLBs of AoA $\theta_2$ estimates differ markedly from those of the  AoD.
In the presence of a DRIS,
both the CRLB and the MLE curves decrease rapidly as the DRIS is deployed closer to the ISAC BS.
The CRLB changes from $0.52^\circ$ to $0.106^\circ$
as the coordinate varies from $-3$ m to $-0.5$ m.
This improvement of estimation accuracy arises from the explicit dependence of ${\widetilde {\bf {R}}}_l$ 
on $\theta_2$ via the ${ \boldsymbol{\alpha}}_{{\!N\!_{\rm S}}} \!\! \left({\theta_2}\right)$ in~\eqref{CovRDR1}.
As seen in Theorem~\ref{Theorem2}, 
the FIM involving $\frac{ {\partial { {\widetilde {\bf {R}}} _{l}}}}{{\partial {\theta _2}}}$
scales with both the cascaded large-scale gain  and the AoA sensitivity.
Therefore, less propagation loss in the cascaded DRIS-based channel 
 increases the Fisher information about  the AoA $\theta_2$.
From Fig.~\ref{ResFigSensingvsDisMLE}~(b),
the close match between the MLE curves and the CRLBs confirms the derivations in~\eqref{FIM}.

\section{Conclusions}\label{Conclu}
In this work, we have  investigated the impact of a DRIS on bistatic ISAC systems.
To the best of our knowledge, this is the first time that the impact of a DRIS on bistatic ISAC systems has been quantitatively characterized.
Specifically, we have derived a theoretical lower bound of the SINRs and closed-form CRLBs for the AoD and AoA estimates 
to characterize the communication and sensing performance in the presence of a DRIS, respectively.
Our theoretical analysis and numerical results lead to the following conclusions,
highlighting the unique properties of the DRIS for bistatic ISAC systems.
\begin{itemize}
    \item  The implementation of the DRIS in the bistatic ISAC system introduces additional 
time-varying ACA channels for communication users. 
As a result, the communication performance is degraded due to the DRIS-induced ACA channels.
Mathematically, we have derived the statistical characteristics of the ACA channels.
Specifically, the elements of the ACA channels for the $k$-th communication user 
converge in distribution to a complex Gaussian distribution
with zero mean and variance of  ${{{\mathscr{L}}^{\rm c}_{{\rm {cas}},k}} {N\!_{\rm D}}{\overline \mu} }$,
as the number of DRIS reflective elements becomes large.
\item Regarding the bistatic sensing performance,
the use of DRIS introduces an additional time-varying DRIS-based sensing path for the target.
We have derived the statistical characteristics of this time-varying path. 
Mathematically, its elements converge in distribution to a complex Gaussian distribution
with zero mean and variance of  ${{{\mathscr{L}}^{\rm s}_{{\rm {cas}}}} {N\!_{\rm D}}{\overline \nu} }$,
as the number of DRIS reflective elements becomes large.
It is worth noting that the DRIS only decreases the estimation accuracy of the AoD while improving that of the AoA.
\item   Based on our theoretical derivations,
the lower bound of the received SINR contains an ACA term proportional to
 ${ P_0 {{{\mathscr{L}}^{\rm c}_{{\rm {cas}},k}} {N\!_{\rm D}}{\overline \mu} }}$ in the denominator.
 It can be seen that, unlike conventional AJs, 
 increasing the transmit power at the ISAC BS not only fails to mitigate the DRIS-based ACA interference, 
 but in fact amplifies its impact.
 On the other hand, 
the introduction of DRIS transforms the covariance matrix ${\widetilde {\bf {R}}}_l$
from ${\delta^2_{\rm s}}{\bf I}\!_{N_{\rm S}}$ to $\chi^2 \!{{\mathscr{L}}^{\rm s}_{\!{\rm {d}},2}}{{\mathscr{L}}^{\rm s}_{\!{\rm {cas}}}} {N\!_{\rm D}}{\overline \nu} 
    {\left\|{\boldsymbol{x}}_{l}\right\|^2}\!\!
    \left(  { \boldsymbol{\alpha}}^H_{{\!N\!_{\rm S}}} \!\!\left({\theta_2}\right) \!{ \boldsymbol{\alpha}}_{{\!N\!_{\rm S}}} \!\! \left({\theta_2}\right) \! \right) + {\delta^2_{\rm s}}{\bf I}\!_{N_{\rm S}}$
compared to the bistatic ISAC system without DRIS.
As a result, the derived closed-form CRLBs of AoA and AoD estimates clearly show that the DRIS exhibits an anisotropic impact:
 it only degrades the estimation accuracy of the AoD while simultaneously improving that of the AoA.
\end{itemize}

\begin{appendices}
\section{Proof of Proposition~\ref{Proposition1}}\label{AppendixA}
According to the definition of ${\bf H}^{\rm c}_{\!{\rm{A\!C\!A}}}$ in~\eqref{ACAEq},
the channel element $\left[{\bf H}^{\rm c}_{\!{\rm{A\!C\!A}}}\right]_{n,k}$
can be further expressed as  
    \begin{alignat}{1}
        \nonumber
        &{\left[ {{{\bf{H}}^{\rm c}_{\!{\rm{A\!C\!A}}}}} \right]_{n,k}} = \\ \nonumber
        & \sqrt {\frac{{{\varepsilon}{{{\mathscr{L}}}_{\rm{G}}}
        {{\mathscr{L}}^{\rm c}_{{\rm{I}},k}}}}{{{\varepsilon} + 1}}}
        \left[{\widehat {\bf{G}}}^{{\rm{LOS}}}\right]_{n,:} \!
        {\rm{diag}}\! \left({\boldsymbol{\varphi}}(t_{\rm{\!D\!T}}) - {\boldsymbol{\varphi}}(t_{\rm{\!P\!T}})\right)
        {\widehat {{\boldsymbol{h}}}^{\rm c}_{{\rm I},k}}  \\
        &+ \sqrt {\frac{{{{{\mathscr{L}}}_{\rm{G}}}
        {{\mathscr{L}}^{\rm c}_{{\rm{I}},k}}}}{{{\varepsilon} + 1}}}
        \left[{\widehat {\bf{G}}}^{{\rm{NLOS}}}\right]_{n,:} \!
        {\rm{diag}}\! \left({\boldsymbol{\varphi}}(t_{\rm{\!D\!T}}) - {\boldsymbol{\varphi}}(t_{\rm{\!P\!T}})\right)
        {\widehat {{\boldsymbol{h}}}^{\rm c}_{{\rm I},k}},
        \label{RewriHACAele}
    \end{alignat}
where $n=1,\ldots, N_{\rm B}$ and $k=1,\ldots , K_{\rm c}$.
Consequently, we define the following terms:
\begin{alignat}{1}
        {\left[ {{{\widehat{\bf{H}}}^{\rm c}_{\!{\rm{A\!C\!A}}}}} \right]^{\rm{\!LOS}}_{n,k}} &= 
        \left[{\widehat {\bf{G}}}^{{\rm{LOS}}}\right]_{n,:} 
        \!\odot \! \left({\boldsymbol{\varphi}}(t_{\rm{\!D\!T}}) - {\boldsymbol{\varphi}}(t_{\rm{\!P\!T}})\right)
        {\widehat {{\boldsymbol{h}}}^{\rm c}_{{\rm I},k}},  \label{HACAeleLOSLOS} \\
        {\left[ {{{\widehat{\bf{H}}}^{\rm c}_{\!{\rm{A\!C\!A}}}}} \right]^{\rm{\!NLOS}}_{n,k}}  & = 
        \left[{\widehat {\bf{G}}}^{{\rm{NLOS}}}\right]_{n,:} \!
        \!\odot \! \left({\boldsymbol{\varphi}}(t_{\rm{\!D\!T}}) - {\boldsymbol{\varphi}}(t_{\rm{\!P\!T}})\right)
        {\widehat {{\boldsymbol{h}}}^{\rm c}_{{\rm I},k}}.
        \label{HACAeleLOSNLOS}
    \end{alignat}
Furthermore,  
${\left[ {{{\widehat{\bf{H}}}^{\rm c}_{\!{\rm{A\!C\!A}}}}} \right]^{\rm{\!LOS}}_{n,k}}$ and
${\left[ {{{\widehat{\bf{H}}}^{\rm c}_{\!{\rm{A\!C\!A}}}}} \right]^{\rm{\!NLOS}}_{n,k}}$ 
are written as
\begin{alignat}{1}
{\left[ {{{\widehat{\bf{H}}}^{\rm c}_{\!{\rm{A\!C\!A}}}}} \right]^{\! \rm LOS}_{n,k}}
& = \sum\limits_{r = 1}^{N_{\rm D}} {h^{\! \rm LOS}_{r}}
\label{RewriHACAeleLOSsim} \\
\nonumber
&= \sum\limits_{r = 1}^{N_{\rm D}} \!\Big(
\! \left[{\widehat {\bf{G}}}^{\rm LOS}\right]_{n,r}\!\!\big( {\beta _r}(t_{\rm{D\!T}}) e^{j{\varphi_r}(t_{\rm{D\!T}})} \\
&\;\;\;\;\;\;\;\;\;\;\;\;  - {\beta _r}(t_{\rm{P\!T}}) e^{j{\varphi_r}(t_{\rm{P\!T}})} \big) 
\left[{\widehat {{\boldsymbol{h}}}^{\rm c}_{{\rm I},k}}\right]_{r}  \!\Big),
\label{RewriHACAeleLOS}  
\end{alignat}
and
\begin{alignat}{1}
{\left[ {{{\widehat{\bf{H}}}^{\rm c}_{\!{\rm{A\!C\!A}}}}} \right]^{\! \rm NLOS}_{n,k}}
& = \sum\limits_{r = 1}^{N_{\rm D}} {h^{\! \rm NLOS}_{r}}
\label{RewriHACAeleNLOSsim} \\
\nonumber
&= \sum\limits_{r = 1}^{N_{\rm D}} \!\Big( 
\! \left[{\widehat {\bf{G}}}^{\rm NLOS}\right]_{n,r}\!\!\big( {\beta _r}(t_{\rm{D\!T}}) e^{j{\varphi_r}(t_{\rm{D\!T}})} \\
&\;\;\;\;\;\;\;\;\;\;\;\;\;   - {\beta _r}(t_{\rm{P\!T}}) e^{j{\varphi_r}(t_{\rm{P\!T}})} \big) 
\left[{\widehat {{\boldsymbol{h}}}^{\rm c}_{{\rm I},k}}\right]_{r}  \!\Big).
\label{RewriHACAeleNLOS} 
\end{alignat}

Conditioned on the fact that the random variables in $\bf G$, ${{\boldsymbol{\varphi}}(t)}$, 
and ${\widehat {{\boldsymbol{h}}}^{\rm c}_{{\rm I},k}}$ are independent, 
we have %${\mathbb{E}}\!\left[ {h^{\! \rm LOS}_{r}} \right] = {\mathbb{E}}\!\left[ {h^{\! \rm NLOS}_{r}} \right] = 0$.
% Namely, 
\begin{alignat}{1}
        {\mathbb{E}}\!\!\left[ \!{\left[ {{{\widehat{\bf{H}}}^{\rm c}_{\!{\rm{A\!C\!A}}}}} \right]^{\! \rm LOS}_{n,k}} \right] & = 0,  \label{exHACAeleLOSLOS} \\
        {\mathbb{E}}\!\!\left[ \!{\left[ {{{\widehat{\bf{H}}}^{\rm c}_{\!{\rm{A\!C\!A}}}}} \right]^{\! \rm NLOS}_{n,k}} \right]  & = 0. \label{exHACAeleLOSNLOS}
\end{alignat}

On the other hand, the variance of ${h^{\! \rm LOS}_{r}}$ is calculated as
            % {\mathbb{E}}\!\left[ {{h^{\! \rm LOS}_{r}}} \left({{h^{\! \rm LOS}_{r}}}\right)^H  \right] \\
            % \nonumber
    \begin{alignat}{1}
            \nonumber
             {\mathbb{D}} \!\!\left[ {h^{\! \rm LOS}_{r}} \right]  
            &= {\mathbb{E}}  \!\big[\!\left|{\beta _r}(t_{\rm{D\!T}})\right|^2 \!+\! 
            \left|{\beta _r}(t_{\rm{P\!T}})\right|^2 \\
            & \; -2{\beta _r}(t_{\rm{D\!T}}){\beta _r}(t_{\rm{P\!T}})\cos\!\left({\varphi_r}(t_{\rm{D\!T}}) \!-\! {\varphi_r}(t_{\rm{P\!T}})\right)\big].
        \label{VarLOS}
    \end{alignat} 
More specifically, $ {\mathbb{D}} \!\!\left[ {h^{\! \rm LOS}_{r}} \right]$ can be represented by
\begin{alignat}{1}
    &{\mathbb{D}} \!\!\left[ {h^{\! \rm LOS}_{r}} \right] = \overline \mu \\
    &= \sum\limits_{{i_1} = 1}^{{2^b}} {\sum\limits_{{i_2} = 1}^{{2^b}} {{p_{{i_1}}}{p_{{i_2}}}
    \!\left(\!{\mu _{{i_1}}^2 \!+\! \mu _{{i_2}}^2 \!-\! 2{\mu _{{i_1}}}{\mu _{{i_2}}}\cos\! 
    \left(\! {{\phi _{{i_1}}} - {\phi _{{i_2}}}} \right)} \!\right)} } .
    \label{VarLOSf}
\end{alignat}
Similarly, the variance of ${h^{\! \rm NLOS}_{r}}$ is computed as
\begin{equation} 
    {\mathbb{D}}\! \!\left[ {h^{\! \rm NLOS}_{r}} \right] = \overline \mu .
    \label{VarNLOSf}
\end{equation} 

Therefore, the variances of ${\left[ {{{\widehat{\bf{H}}}^{\rm c}_{\!{\rm{A\!C\!A}}}}} \right]^{\! \rm LOS}_{n,k}}$
and 
${\left[ {{{\widehat{\bf{H}}}^{\rm c}_{\!{\rm{A\!C\!A}}}}} \right]^{\! \rm NLOS}_{n,k}}$ can be respectively given by
\begin{alignat}{1}
        {\mathbb{D}} \!\!\left[ \!{\left[ {{{\widehat{\bf{H}}}^{\rm c}_{\!{\rm{A\!C\!A}}}}} \right]^{\! \rm LOS}_{n,k}} \right] & ={\frac{{{\varepsilon}{{{\mathscr{L}}}_{\rm{G}}}
        {{\mathscr{L}}^{\rm c}_{{\rm{I}},k}}}N_{\! \rm D} {\overline \mu} }{{{\varepsilon} + 1}}},  \label{VarHACAeleLOSLOS} \\
        {\mathbb{D}} \!\!\left[ \!{\left[ {{{\widehat{\bf{H}}}^{\rm c}_{\!{\rm{A\!C\!A}}}}} \right]^{\! \rm NLOS}_{n,k}} \right]  & = {\frac{{ {{{\mathscr{L}}}_{\rm{G}}}
        {{\mathscr{L}}^{\rm c}_{{\rm{I}},k}}}N_{\! \rm D} {\overline \mu} }{{{\varepsilon} + 1}}}. \label{VarexHACAeleLOSNLOS}
\end{alignat}
Based on the Lindeberg-L$\acute{e}$vy central limit theorem, we have
\begin{equation} 
    \frac{{\left[ {{{\bf{H}}^{\rm c}_{\!{\rm{A\!C\!A}}}}} \right]_{n,k}}}
    {{\sqrt {N_{\rm D}{{{\mathscr{L}}}_{\rm{G}}}
        {{\mathscr{L}}^{\rm c}_{{\rm{I}},k}}{\overline \mu}}} }  
    \mathop  \to \limits^{\rm{d}} \mathcal{CN}\left( {0,{1}} \right),
    \label{HACALTeq} 
\end{equation}
as ${N_{\rm D}} \to \infty$. 

As a result, Proposition~\ref{Proposition1} can be proved.

\section{Proof of Proposition~\ref{Proposition2}}\label{AppendixB}
Based on the expression of the time-varying DRIS-based sensing path ${\boldsymbol{h}_{\rm D}^{\rm s}}\!\left(t\right)$,
we can rewrite the $n$-th element of ${\boldsymbol{h}_{\rm D}^{\rm s}}\!\left(t\right)$ as the following form:
\begin{alignat}{1}
    \nonumber
     {{h}_{{\rm D},n}^{\rm s}}\!(t)  & =\left[ {\bf G}{\rm{diag}}\!\left({{\boldsymbol{\varphi}}(t)}\right){{{ \boldsymbol{h}}}^{\rm s}_{{\rm I}}}\right]_{n} =\\ \nonumber
    & \sqrt {\frac{{{\varepsilon}{{{\mathscr{L}}}^{\rm s}_{{\rm{cas}},k}}  }}{{{\varepsilon} + 1}}}\left[{\widehat {\bf{G}}}^{{\rm{LOS}}}\right]_{n,:}
     \!  {\rm diag}\!\left({{\boldsymbol{\varphi}}(t)}\right)   {\widehat {{ \boldsymbol{h}}}^{\rm s}_{{\rm I}}} \\
    & + \sqrt {\frac{{{{{\mathscr{L}}}^{\rm s}_{{\rm{cas}},k} } }}{{{\varepsilon} + 1}}}\left[{\widehat {\bf{G}}}^{{\rm{NLOS}}}\right]_{n,:}
    \!{\rm diag}\!\left({{\boldsymbol{\varphi}}(t)}\right)  {\widehat {{ \boldsymbol{h}}}^{\rm s}_{{\rm I}}}.
    \label{hDsRewir}
\end{alignat}
Therefore, \eqref{hDsRewir} can be written as 
\begin{alignat}{1}
    \nonumber
    {{h}_{{\rm D},n}^{\rm s}}\!(t) &=  \sqrt {\!\frac{{\varepsilon}{{\mathscr{L}}^{\rm s}_{{\rm{cas}}}} \!N\!_{\rm D}}{{1\!+\!{\varepsilon}} }}
     \!\!\left( \!{\frac{\sum\limits_{r = 1}^{{N_{\rm{D}}}} {\!{ {{e^{ \!- \!j\!\frac{2\pi}{\lambda}\left(\!{D^{r}_{n}} -{D_n}\right)}}}} {\beta_{r}\!(t)}{e^{j{\varphi_{r}}(t)}} {{{\widehat {  {h} }}^{\rm s}_{{\rm{I}},r}}} } }{\sqrt{\!N_{\rm D}}}} \!\right) \\
    &+ \sqrt {\!\frac{ {{\mathscr{L}}^{\rm s}_{{\rm{cas}}}} \!N\!_{\rm D}}{{1\!+\!{\varepsilon}} }}
    \!\!\left( \!{\frac{\sum\limits_{r = 1}^{{N_{\rm{D}}}} {\!{ \left[{\widehat {\bf{G}}}^{{\rm{NLOS}}}\right]_{n,r}} {\beta_{r}\!(t)}{e^{j{\varphi_{r}}(t)}} {{{\widehat {  {h} }}^{\rm s}_{{\rm{I}},r}}} } }{\sqrt{\!N_{\rm D}}}} \!\right),
    \label{hDsRewirextend}
\end{alignat}
where ${{{\widehat {  {h} }}^{\rm s}_{{\rm{I}},r}}}$ is the $r$-th element 
of ${\widehat {{ \boldsymbol{h}}}^{\rm s}_{{\rm I}}}$. 
According to the definition in~\eqref{hIs}, ${{{\widehat {  {h} }}^{\rm s}_{{\rm{I}},r}}}$ can 
be specifically expressed by
\begin{equation}
    \nonumber
    {{{\widehat {  {h} }}^{\rm s}_{{\rm{I}},r}}} = e^{j2\pi\Delta \left(\!\left\lfloor {\frac{r}{{{N_{ {\rm{D}},h}}}}} \right\rfloor -1\!\right)\sin\!{\vartheta\!_h}}
    e^{j2\pi\Delta \left(\!r-{{N_{ {\rm{D}},h}}}\left\lfloor {\frac{r}{{{N_{ {\rm{D}},h}}}}} \right\rfloor -1\!\right)\sin\!{\vartheta\!_v}},
\end{equation}
where $\left\lfloor x \right\rfloor$ denotes the largest integer less than or equal to $x$.

For analytical convenience, we define the two equations as follows:
\begin{alignat}{1}
    a_r &=  {\frac{  {\!{ {{e^{ \!- \!j\!\frac{2\pi}{\lambda}\left(\!{D^{r}_{n}} -{D_n}\right)}}}} {\beta_{r}\!(t)}{e^{j{\varphi_{r}}(t)}} {{{\widehat {  {h} }}^{\rm s}_{{\rm{I}},r}}} } }{\sqrt{\!N_{\rm D}}}},  \label{hDsRewirextenda}\\
    b_r &= {\frac{  {\!{ \left[{\widehat {\bf{G}}}^{{\rm{NLOS}}}\right]_{n,r}} {\beta_{r}\!(t)}{e^{j{\varphi_{r}}(t)}} {{{\widehat {  {h} }}^{\rm s}_{{\rm{I}},r}}} } }{\sqrt{\!N_{\rm D}}}}.
    \label{hDsRewirextendb}
\end{alignat}

Given that the random elements in $\bf G$, ${{\boldsymbol{\varphi}}(t)}$, and ${{{ \boldsymbol{h}}}^{\rm s}_{{\rm I}}}$ are independent,
the expectation of ${{h}_{{\rm D},n}^{\rm s}}\!(t)$ is then calculated as ${\mathbb{E}}\!\left[{{h}_{{\rm D},n}^{\rm s}}\!(t)\right] = 0$.
Furthermore, the variance of ${{h}_{{\rm D},n}^{\rm s}}\!(t)$ is reduced to
\begin{equation}
   {\mathbb{D}}\!\!\left[{{h}_{{\rm D},n}^{\rm s}}\!(t)\!\right] \!\!=
   {\!\frac{{\varepsilon}{{\mathscr{L}}^{\rm s}_{{\rm{cas}}}} \!N\!_{\rm D}}{{1\!+\!{\varepsilon}} }} {\sum\limits_{r = 1}^{{N_{\rm{D}}}} 
  \! {\mathbb{E}}\!\!\left[\left|a_r\right|^2\!\right]}\!\! +  
    {\!\frac{ {{\mathscr{L}}^{\rm s}_{{\rm{cas}}}} \!N\!_{\rm D}}{{1\!+\!{\varepsilon}} }} {\sum\limits_{r = 1}^{{N_{\rm{D}}}}\!{\mathbb{E}}\!\!\left[\left|b_r\right|^2\right]}.
    \label{HDele}
\end{equation}

According to~\eqref{hDsRewirextenda} and~\eqref{hDsRewirextendb}, 
we have 
\begin{equation}
    {\mathbb{E}}\!\!\left[\left|a_r\right|^2\right] = {\mathbb{E}}\!\!\left[\left|b_r\right|^2\right] = \frac{{\mathbb{E}}\!\!\left[\left|{\beta_{r}\!(t)}\right|^2\right]}{ {N\!_{\rm D}}} 
    = \frac{\sum\limits_{i = 1}^{{2^b}} {{p_i}{{\mu^2_i}}}}{N\!_{\rm D}} = \frac{\overline \nu}{N\!_{\rm D}}.
    \label{HDele1}
\end{equation}
Therefore, \eqref{HDele} is then reduced to 
\begin{equation}
    {\mathbb{D}}\!\!\left[{{h}_{{\rm D},n}^{\rm s}}\!(t)\right] = 
    {{\mathscr{L}}^{\rm s}_{{\rm{cas}}}} \!N\!_{\rm D}{\overline \nu}.
    \label{HDele2}
\end{equation}

Consequently, the i.i.d. elements of the time-varying DRIS-based sensing path
${\boldsymbol{h}_{\rm D}^{\rm s}}\!\left(t\right)$ have mean zero and variance 
${{\mathscr{L}}^{\rm s}_{{\rm{cas}}}} \!N\!_{\rm D}{\overline \nu}$.
According to the Lindeberg-L$\acute{e}$vy central limit theorem,
the random variable ${{h}_{{\rm D},n}^{\rm s}}\!(t)$ converges in distribution to a normal distribution
as $N\!_{\rm D} \to \infty$, i.e., 
\begin{equation}
    {\boldsymbol{h}_{\rm D}^{\rm s}}\!(t)   
    \mathop  \to \limits^{\rm{d}}  \mathcal{CN}\!\!\left( {{\bf 0},  {{{\mathscr{L}}^{\rm s}_{{\rm {cas}}}}\!N\!_{\rm D}{\overline \nu} {\bf I}_{{N\!_{\rm B}}}} } \right),
    \label{hDsGua}
\end{equation}
where $\bf 0$ is a zero vector and ${\bf I}_{{N\!_{\rm B}}}$ is an $N\!_{\rm B} \times N\!_{\rm B}$ identity matrix.

\end{appendices}

\end{document}